\documentclass[DIV9,a4paper,final]{scrartcl}

\pagestyle{plain}

\usepackage[utf8]{inputenc}
\usepackage[T1]{fontenc}

\usepackage{tabularx}
\usepackage{amssymb}
\usepackage{amsthm}
\usepackage{lmodern}

\usepackage{microtype}
\usepackage{nccmath}
\usepackage{algorithm}
\usepackage{algpseudocode}
\usepackage[table]{xcolor}
\usepackage{wrapfig}
\usepackage{tikz}

\algnewcommand{\LineIf}[2]{\State \algorithmicif\, #1 \,\algorithmicthen\, #2 \,\algorithmicend\ \algorithmicif}
\algnewcommand{\LineIfElse}[3]{\State \algorithmicif\, #1 \,\algorithmicthen\, #2 \,\algorithmicelse\, #3 \,\algorithmicend\ \algorithmicif}
\algnewcommand{\LineForAll}[2]{\State \algorithmicforall\, #1 \,\algorithmicdo\, #2 \,\algorithmicend\ \algorithmicfor}

\renewcommand{\leq}{\leqslant}

\renewcommand{\le}{\leq}

\newcommand{\union}{\mathbin{\cup}}
\newcommand{\bigunion}{\mathop{\bigcup}}
\newcommand{\intersect}{\mathbin{\cap}}
\newcommand{\MonInt}{\ensuremath{\textsc{MonIsect}}}
\newcommand{\Memb}{\ensuremath{\textsc{Memb}}}
\newcommand{\PS}{\ensuremath{\textsc{DfaIsect}}}

\newcommand{\abs}[1] {\ensuremath\left|#1\right|}
\newcommand{\set}[2] {\ensuremath{\left\{#1 \mid #2\right\}}}
\newcommand{\os}[1] {\ensuremath{\left\{#1\right\}}}
\newcommand{\Jeq} {\mathrel{\ensuremath{\mathcal{J}}}}
\newcommand{\Jle} {\mathrel{\ensuremath{\le_\mathcal{J}}}}

\newcommand{\NC}{\ensuremath{\mathsf{NC}}}
\newcommand{\PTIME}{\ensuremath{\mathsf{P}}}
\newcommand{\NP}{\ensuremath{\mathsf{NP}}}
\newcommand{\PSPACE}{\ensuremath{\mathsf{PSPACE}}}
\newcommand{\NPSPACE}{\ensuremath{\mathsf{NPSPACE}}}
\newcommand{\vV}{\mathbf{V}}
\newcommand{\vTrivial}{\mathbf{1}}
\newcommand{\vAp}{\mathbf{A}}
\newcommand{\vMon}{\mathbf{Mon}}
\newcommand{\vDA}{\mathbf{DA}}
\newcommand{\vDS}{\mathbf{DS}}
\newcommand{\vDO}{\mathbf{DO}}
\newcommand{\vG}{\mathbf{G}}
\newcommand{\vRone}{\mathbf{R_1}}
\newcommand{\vLone}{\mathbf{L_1}}
\newcommand{\vAone}{\mathbf{A_1}}
\newcommand{\vR}{\mathbf{R}}
\newcommand{\vAcomTwo}{\mathbf{Acom_2}}
\newcommand{\vXR}{\mathbf{XR}}
\newcommand{\vXL}{\mathbf{XL}}
\newcommand{\Sift}{\textsc{sift}}
\newcommand{\Init}{\textsc{init}}
\newcommand{\alp}{\mathsf{alph}}
\newcommand{\val}{\mathsf{val}}
\newcommand{\abstar}{\ensuremath{B_2^1}}
\DeclareMathOperator{\lcm}{lcm}

\newcommand{\newres}{\cellcolor[rgb]{0.99,0.78,0.07}}
\newcommand{\redres}{\cellcolor[rgb]{0.99,0.89,0.54}}

\newcommand{\N}{\mathbb{N}}
\newcommand{\R}{\mathbb{R}}
\newcommand{\T}{\mathcal{T}}
\newcommand{\lbl}{\Lambda}

\newcommand{\ie}{i.e.,~}
\newcommand{\eg}{e.g.~}
\newcommand{\ms}{\hspace*{0.5pt}}

\makeatletter
\newcommand{\thickhline}{%
    \noalign {\ifnum 0=`}\fi \hrule height 1pt
    \futurelet \reserved@a \@xhline
}
\newcolumntype{"}{@{\hskip\tabcolsep\vrule width 1pt\hskip\tabcolsep}}
\newcolumntype{Y}{>{\centering\arraybackslash}X}

\makeatother

\newtheorem{theorem}{Theorem}
\newtheorem{lemma}[theorem]{Lemma}
\newtheorem{corollary}[theorem]{Corollary}
\newtheorem{proposition}[theorem]{Proposition}

\bibliographystyle{abbrv}

\title{The Intersection Problem \\ for Finite Monoids}

\author{Lukas Fleischer \and Manfred Kuf\-leitner}
\date{FMI, University of Stuttgart\thanks{This work was supported by the DFG grants DI 435/5-2 and \mbox{KU 2716/1-1}.}\\[.1mm]
  Universitätsstraße 38, 70569 Stuttgart, Germany\\
  \texttt{$\{$fleischer,kufleitner$\}$@fmi.uni-stuttgart.de}}

\begin{document}

\maketitle

\begin{abstract}
  We investigate the intersection problem for finite monoids, which asks for a
  given set of regular languages, represented by recognizing morphisms to
  finite monoids from a variety $\vV$, whether there exists a word contained in
  their intersection.
  Our main result is that the problem is $\PSPACE$-complete if $\vV
  \not\subseteq \vDS$ and $\NP$-complete if $\vTrivial \subsetneq \vV \subseteq
  \vDO$.
  Our $\NP$-algorithm for the case $\vV \subseteq \vDO$ uses novel methods,
  based on compression techniques and combinatorial properties of $\vDO$.
  We also show that the problem is log-space reducible to the intersection
  problem for deterministic finite automata (DFA) and that a variant of the
  problem is log-space reducible to the membership problem for transformation
  monoids. In light of these reductions, our hardness results can be seen as a
  generalization of both a classical result by Kozen~\cite{koz77} and a theorem
  by Beaudry, McKenzie and Th\'{e}rien~\cite{BeaudryMT92}.
\end{abstract}

\section{Introduction}

In 1977, Kozen showed that deciding whether the intersection of the languages
recognized by a set of given deterministic finite automata (DFA) is non-empty
is $\PSPACE$-complete~\cite{koz77}. This result has since been the building
block for numerous hardness results in formal language theory and related
fields; see \eg\cite{Bernatsky1997,chohuynh91,dgh05IC,JiangR91}.
It is natural to ask whether the problem becomes easier when restricting the
input. Various special cases, such as bounding the number $k$ of automata in
the input~\cite{LangeR92} or considering only automata with a fixed number of
accepting states~\cite{BlondinKM16}, were investigated in follow-up work;
see~\cite{HolzerK11} for a survey.

Another very natural restriction is to only consider automata with certain
\emph{structural properties}.
One such property is \emph{counter-freeness}: an automaton is
\emph{counter-free} if no word permutes a non-trivial subset of its states.  By
a famous result of Schützenberger~\cite{sch65sf}, a regular language is
recognized by a counter-free automaton if and only if it is star-free.
These properties are often expressed using the algebraic framework:
instead of considering the automaton itself, one considers its transition
monoid. The latter is the transformation monoid generated by the action of the letters
on the set of states. Now, properties of automata can be given by membership of the transition
monoid in certain classes, so-called \emph{varieties}, of finite monoids.
For example, an automaton is counter-free if and only if its transition monoids
belongs to the variety $\vAp$ of \emph{aperiodic} monoids.
The DFA intersection problem for a variety $\vV$, denoted by $\PS(\vV)$, is
formalized as follows.
\vspace{1em}

\noindent
\begin{tabularx}{\textwidth}{p{1.15cm}X}
  \thickhline
  $\PS(\vV)$ \\
  \hline
  \textsf{Input}: & DFAs $A_1, \dots, A_k$ with transition monoids from $\vV$ \\
  \textsf{Question}: & Is $L(A_1) \intersect \cdots \intersect L(A_k) \ne \emptyset$? \\
  \hline
\end{tabularx}
\vspace{1ex}

Note that $\PS(\vMon)$, where $\vMon$ is the variety of all finite monoids, is
the general DFA intersection problem considered by Kozen.
A careful inspection of his proof actually reveals that $\PS(\vAp)$ is
$\PSPACE$-complete already~\cite{chohuynh91}.
Additionally requiring all DFAs to have a single accepting state, we obtain a
variant of $\PS(\vV)$ reminiscent of another problem investigated by Kozen, the
\emph{membership problem for transformation monoids}.
\vspace{1em}

\noindent
\begin{tabularx}{\textwidth}{p{1.15cm}X}
  \thickhline
  $\Memb(\vV)$ \\
  \hline
  \textsf{Input}: & Transformations $f_1, \dots, f_m \colon X \to X$ generating a monoid $T \in \vV$ and $g \colon X \to X$ \\
  \textsf{Question}: & Does $g$ belong to $T$? \\
  \hline
\end{tabularx}
\vspace{1ex}

The complexity of $\Memb(\vV)$ was studied extensively in a series of 
papers~\cite{BabaiLS87,Beaudry88,Beaudry88thesis,Beaudry94,BeaudryMT92,FurstHopcroftLuks80,Sims1967}.
However, for certain varieties $\vV$, obtaining the exact complexity of
$\PS(\vV)$ and $\Memb(\vV)$ is a challenging problem.
To date, only partial results are known, see Table~\ref{tab:summary}. For
example, it is open whether or not $\Memb(\vDA) \in \NP$, a question stated
explicitly in~\cite{BeaudryMT92} and revisited in~\cite{tt02} around ten years
later.

\begin{table}
  \renewcommand{\arraystretch}{1.15}
  \centering
  \begin{tabularx}{\textwidth}{|Y"Y|Y|Y|}
    \hline
    & $\MonInt(\vV)$ & $\MonInt_1(\vV)$ & $\Memb(\vV)$~\cite{BabaiLS87,BeaudryMT92} \\
    \thickhline
    $\NC$ & --- & \redres{$\vV \subseteq \vG$} & $\vV \subseteq \vG$ \\
    $\PTIME$ & --- & \redres{$\vV \subseteq \vRone \lor \vLone$} & $\vV \subseteq \vRone \lor \vLone$ \\
    $\NP$ & \newres{$\vV \subseteq \vDO$} & \redres{$\vV \subseteq \vDO$} & $\vV \subseteq \vR, \vV \subseteq \vAone$ \\[1mm]
    $\NP$-hard & \newres{all $\vV \ne \vTrivial$} & --- & {$\vAcomTwo \subseteq \vV$,\- $\vXR \subseteq \vV, \vXL \subseteq \vV$} \\
    $\PSPACE$ & \redres{all $\vV$} & \redres{all $\vV$} & all $\vV$ \\
    $\PSPACE$-hard & \redres{$\vV \not\subseteq \vDS$} & \newres{$\vV \not\subseteq \vDS$} & $\vV \not\subseteq \vDS$ \\
    \hline
  \end{tabularx}
  \caption{Summary of complexity results ({\textcolor[rgb]{0.99,0.78,0.07}{\rule{5mm}{2mm}}} new main result, {\textcolor[rgb]{0.99,0.89,0.54}{\rule{5mm}{2mm}}} follows from reductions)
  }
  \label{tab:summary}
  \renewcommand{\arraystretch}{1.0}
\end{table}

Since algebraic tools are already used to express structural properties of
automata, it seems natural to consider the fully algebraic version of the
intersection problem by directly using finite monoids as language acceptors
instead of taking the detour via automata and their transition monoids.
A language $L \subseteq A^*$ is \emph{recognized} by a morphism $h \colon A^*
\to M$ to a finite monoid $M$ if $L = h^{-1}(P)$ for some subset $P$ of $M$.
The set $P$ is often called the \emph{accepting set} because it resembles the accepting
states in finite automata.
A monoid $M$ recognizes a language $L \subseteq A^*$ if there exists a morphism
$h \colon A^* \to M$ recognizing $L$.
It is well-known that a language is recognized by a finite monoid if and only
if it is regular.
For a variety of finite monoids $\vV$, the \emph{intersection problem for
$\vV$} is defined as follows.
\vspace{1em}

\noindent
\begin{tabularx}{\textwidth}{p{1.15cm}X}
  \thickhline
  $\MonInt(\vV)$ \\
  \hline
  \textsf{Input}: & Morphisms $h_i \colon A^* \to M_i \in \vV$ and sets $P_i \subseteq M_i$ with $1 \le i \le k$ \\
  \textsf{Question}: & Is $h_1^{-1}(P_1) \intersect \cdots \intersect h_k^{-1}(P_k) \ne \emptyset$? \\
  \hline
\end{tabularx}
\vspace{1ex}

We assume that the monoids are given as multiplication tables, such that,
assuming a random-access machine model, multiplications can be performed in
logarithmic time.

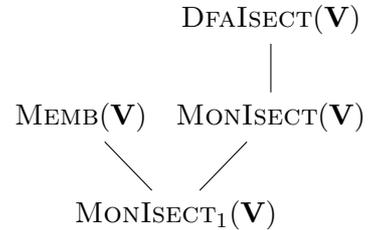
\begin{wrapfigure}{r}{0.35\textwidth}
  \centering
  \begin{tikzpicture}[grow=up,level distance=1.3cm,sibling distance=2.5cm,every node/.style={align=center}]
    \node {$\MonInt_1(\vV)$} child { node {$\MonInt(\vV)$} child { node {$\PS(\vV)$} }  } child { node {$\Memb(\vV)$} };
  \end{tikzpicture}
  \caption{Relations between the problems considered in this work}
  \label{fig:reductions}
\end{wrapfigure}

There is a close connection to both the DFA intersection problem and the
membership problem for transformation monoids.
More specifically, for every variety~$\vV$, there is a log-space reduction of
$\MonInt(\vV)$ to $\PS(\vV)$.
The variant $\MonInt_1(\vV)$ of the finite monoid intersection problem, where
each of the accepting sets is a singleton, can be reduced to $\Memb(\vV)$.
Our reducibility results are depicted in Figure~\ref{fig:reductions}.

Not only is the algebraic version of the intersection problem a natural problem
to consider, making progress in classifying its complexity also raises hope to
make progress in solving open complexity questions regarding
$\PS(\vV)$ and $\Memb(\vV)$. Using novel techniques, we prove that
$\MonInt(\vV)$ is $\NP$-complete whenever $\vV \subseteq \vDO$ and
$\PSPACE$-complete whenever $\vV \not\subseteq \vDS$.
In particular, since $\vDA$ is a subset of $\vDO$, we obtain an $\NP$-algorithm
for $\MonInt(\vDA)$ while the problem of whether there exists such an algorithm
for $\Memb(\vDA)$ or $\PS(\vDA)$ has been open for more than 25~years.
Moreover, in view of the reductions mentioned above, our $\PSPACE$-hardness
result can be seen as a generalization of both Kozen's result and a result from
\cite{BeaudryMT92}, stating that every variety of aperiodic monoids not
contained within $\vDA = \vDS \intersect \vAp$ admits a $\PSPACE$-complete
transformation monoid membership problem.

Our results are summarized in Table~\ref{tab:summary}. Only a very small gap of
varieties contained within $\vDS$ but not $\vDO$ remains.
Answering complexity questions in this setting is deeply connected to
understanding the languages recognized by monoids in $\vDS$ which is another
problem open for over twenty years; see \eg\cite[Open Problem 14]{alm94}.
Obtaining a dichotomy result for $\MonInt(\vV)$ is likely to provide new major
insights for both $\PS(\vV)$ and the language variety corresponding to $\vDS$,
and, conversely, new insights on either language properties of $\vDS$ or on
$\PS(\vDS)$ will potentially help with obtaining such a result.

We conclude with a first complexity result on the intersection problem for
finite monoids.

\begin{theorem}
  $\MonInt(\vMon) \in \PSPACE$.
  \label{thm:pspace}
\end{theorem}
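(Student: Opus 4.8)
The plan is to use Savitch's theorem, which collapses $\NPSPACE$ to $\PSPACE$, so it suffices to exhibit a nondeterministic polynomial-space algorithm deciding $\MonInt(\vMon)$. The key observation is that a word $w$ lies in $h_1^{-1}(P_1) \cap \cdots \cap h_k^{-1}(P_k)$ if and only if the tuple $(h_1(w), \dots, h_k(w))$ lies in $P_1 \times \cdots \times P_k$. Since each $h_i$ is a morphism, this tuple can be computed incrementally: reading $w$ letter by letter, maintaining the current tuple in $M_1 \times \cdots \times M_k$, and updating by right-multiplication with $(h_1(a), \dots, h_k(a))$ for each new letter $a$. The state of this computation is a single tuple of monoid elements, which takes only $O(\sum_i \log |M_i|)$ bits to store — polynomial in the input size. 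So the algorithm is: nondeterministically guess the letters of $w$ one at a time, maintain the running product tuple, and accept as soon as it lands in $P_1 \times \cdots \times P_k$.

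**The one subtlety** is termination: a priori the witness word $w$ could be arbitrarily long, so the nondeterministic guessing must be bounded to stay within space (and to be a valid $\NPSPACE$ procedure one does not strictly need a time bound, but one must ensure the search is finite, e.g. by equipping it with a step counter). The standard fix is a reachability argument: the set of tuples reachable from the initial tuple $(1, \dots, 1)$ under right-multiplication by the generators $\os{(h_1(a), \dots, h_k(a)) : a \in A}$ forms a graph on the vertex set $M_1 \times \cdots \times M_k$, which has at most $\prod_i |M_i|$ vertices. If any accepting tuple is reachable at all, it is reachable by a path of length at most $\prod_i |M_i| - 1$. Hence it suffices to guess a word of length at most $\prod_i \abs{M_i}$; this bound is exponential in the input but a binary step counter for it needs only $O(\sum_i \log |M_i|)$ bits, which is polynomial. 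Decrementing the counter at each step and rejecting when it hits zero makes the procedure halt.

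**The main obstacle**, such as it is, is purely bookkeeping: one must check that all the per-step operations — looking up $h_i(a)$ in the description of the morphism, performing the $k$ monoid multiplications using the multiplication tables, testing membership of the updated tuple in each $P_i$, and decrementing the step counter — can each be done in polynomial time and space, given that the multiplication tables and the sets $P_i$ are part of the input and (per the stated input convention) a single multiplication costs logarithmic time. This is routine. Putting it together: the $\NPSPACE$ procedure runs in space polynomial in the input, so by Savitch's theorem $\MonInt(\vMon) \in \PSPACE$, and since $\vV \subseteq \vMon$ for every variety $\vV$, this also gives $\MonInt(\vV) \in \PSPACE$ for all $\vV$, matching the entry in Table~\ref{tab:summary}.
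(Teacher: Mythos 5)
Your proposal is correct and follows essentially the same approach as the paper: guess the witness word letter by letter while maintaining only the tuple of images under the morphisms, then appeal to Savitch's theorem. The only difference is that you explicitly supply the termination/step-counter argument, which the paper's (very terse) proof leaves implicit.
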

\begin{proof}
  Since $\PSPACE$ = $\NPSPACE$ by Savitch's Theorem, it suffices to give a
  non-deterministic algorithm which requires polynomial space.
  The algorithm proceeds by guessing a word in the intersection, letter by
  letter. The word is not written down explicitly but after each guess, the
  image of the current prefix under each morphism is computed and stored.
  Finally, the algorithm verifies that each of the images is in the
  corresponding accepting set.
\end{proof}

\section{Preliminaries}
\label{sec:prelim}

\subparagraph{Words and Languages.}

Let $A$ be a finite alphabet.
A \emph{word} over $A$ is a finite sequence of letters $a_1 \cdots a_\ell$
with $a_i \in A$ for all $i \in \os{1, \dots, \ell}$.
The set $A^*$ denotes the set of all words over $A$ and a \emph{language} is a
subset of $A^*$.
The \emph{content} (or \emph{alphabet}) of a word $w = a_1 \cdots a_\ell$ is the subset
$\alp(w) = \os{a_1, \dots, a_\ell}$ of $A$.
A word $u$ is a \emph{factor} of $w$ if there exist $p, q \in A^*$
such that $w = puq$; and, when the factorization is fixed, then the position of
$u$ is called its \emph{occurrence}.

\subparagraph{Algebra.}

Let $M$ be a finite monoid. An element $e \in M$ is \emph{idempotent} if $e^2 =
e$. The set of all idempotent elements of $M$ is denoted by $E(M)$. In a finite
monoid $M$, the integer $\omega_M = \abs{M}!$ plays an important role: for
each $m \in M$, the element $m^{\omega_M}$ is idempotent. For convenience, we
often write $\omega$ instead of $\omega_M$ if the reference to $M$ is clear
from the context.
For two elements $m, n \in M$, we write $m \Jle n$ if the two-sided ideal of
$m$ is contained in the two-sided ideal of $n$, \ie{}$MmM \subseteq MnM$.
We write $m \Jeq n$ if both $m \Jle n$ and $n \Jle m$.

The \emph{direct product} of two monoids $M$ and $N$ is the Cartesian product
$M \times N$ with componentwise multiplication. 
A monoid $N$ is a \emph{quotient} of a monoid $M$ if there exists a surjective
morphism $h \colon M \to N$.
A monoid $N$ is a \emph{divisor} of a monoid $M$ if $N$ is a quotient of a
submonoid of $M$.

A \emph{variety} of finite monoids is a class $\vV$ of finite monoids which is
closed under (finite) direct products and divisors. The class of all finite
monoids $\vMon$ is a variety. The following other varieties play an important
role in this work:

\begin{align*}
  \vG & = \set{M \in \vMon}{\forall e \in E(M): e = 1} \\
  \vDS & = \set{M \in \vMon}{\forall e, f \in E(M): e \Jeq f \implies (efe)^\omega = e} \\
  \vDO & = \set{M \in \vMon}{\forall e, f \in E(M): e \Jeq f \implies efe = e}
\end{align*}

It is easy to see that $\vG$ contains exactly those finite monoids which are
groups. Since direct products of groups are groups and divisors of groups are
groups, $\vG$ is indeed a variety.
For proofs that $\vDS$ and $\vDO$ are varieties, we refer to~\cite{pin86}.
From the definitions, it follows immediately that $\vDO \subseteq \vDS$.
There exist several other interesting characterizations of $\vDS$. Let
$\abstar$ be the monoid defined on the set $\os{1, a, b, ab, ba, 0}$ by the
operation $aba = a$, $bab = b$ and $a^2 = b^2 = 0$ where $0$ is a zero element.
Then the following holds, see \eg\cite{alm94}.

\begin{proposition}
  Let $M$ be a finite monoid. The following properties are equivalent:
  \begin{enumerate}
    \item $M \in \vDS$.
    \item For each $e \in E(M)$ and $x \in M$ with $e \Jle x$, we have $(exe)^\omega = e$.
    \item For each $e \in E(M)$, the elements $\set{x \in M}{e \Jle x}$ form a submonoid of $M$.
    \item $\abstar$ is not a divisor of $M \times M$.
  \end{enumerate}
  \label{prop:ds-char}
\end{proposition}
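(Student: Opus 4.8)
The plan is to prove the equivalence of the four characterizations of $\vDS$ by establishing a cycle of implications, roughly $(1) \Rightarrow (2) \Rightarrow (3) \Rightarrow (1)$ together with the equivalence $(1) \Leftrightarrow (4)$ handled separately. The proof is essentially standard monoid theory (Green's relations and the structure of regular $\mathcal{J}$-classes), so I would organize it around the key algebraic facts rather than grinding each step.

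First, for $(2) \Rightarrow (1)$: this is immediate, since if $e, f \in E(M)$ with $e \Jeq f$, then in particular $e \Jle f$, so taking $x = f$ in (2) gives $(efe)^\omega = e$, which is exactly the defining condition of $\vDS$. For $(1) \Rightarrow (2)$: suppose $M \in \vDS$ and take $e \in E(M)$, $x \in M$ with $e \Jle x$. The element $e' = (exe)^\omega$ is idempotent. I would show $e' \Jeq e$: clearly $e' \Jle e$ since $e' = e \cdot (xe)^{\omega-1}xe \cdot \ldots$ lies in $MeM$ — more carefully, $e'$ is a product involving $e$, hence $e' \Jle e$; conversely $e \Jle x$ together with the fact that $e = e \cdot e \Jle exe \cdot$ (something) — here one uses that $e \Jle x$ implies $e \Jeq exe$ when $e$ is idempotent, a standard stability fact in finite monoids (if $e \Jle x$ then $e \in MxM$, so $e = uxv$; then $e = e^{2k} = (uxv)\cdots(uxv)$ and one extracts $e \Jeq exe$). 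Once $e' \Jeq e$, applying the $\vDS$ condition to the idempotents $e, e'$ (noting $e e' e = e'$ by absorption, since $e' = (exe)^\omega$ starts and ends with $e$) gives $e' = (e e' e)^\omega = e' {}^\omega = e'$... I need to be a little careful and instead apply the condition in the form that forces $e' = e$; the cleanest route is: $e \Jeq e'$, $e e' e = e'$, so $(e e' e)^\omega = e'^\omega = e'$, but the $\vDS$ axiom says $(e e' e)^\omega = e$, hence $e' = e$, i.e. $(exe)^\omega = e$.

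For $(2) \Leftrightarrow (3)$: if (2) holds, fix $e \in E(M)$ and let $S = \set{x \in M}{e \Jle x}$. Clearly $e \in S$ (so $S$ is nonempty and contains a left-right identity for itself after adjusting), and $S$ is closed under multiplication because $e \Jle x$ and $e \Jle y$ imply $e \Jle xy$? — that is false in general, so this is exactly where (2) does the work: one shows that within $S$ the element $e$ acts as an identity up to the $\omega$-power condition, and $exeye$-type products stay in the $\mathcal J$-class of $e$, making $eSe \cup \{e\}$ (or the appropriate set) a submonoid; the precise statement is that $S$ itself is closed, using that $e \Jle x, e \Jle y \Rightarrow e = (exe)^\omega \Jle xy$. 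Conversely, if $S$ is a submonoid, then for $e \Jle x$ we have $x, x^\omega, e \in S$, and a short computation inside the submonoid (which has $e$ as its identity, since every element of $S$ is $\Jge e$ and $S \ni e$ forces $e$ to be the identity of $S$ by standard finite-monoid reasoning) yields $(exe)^\omega = e$. Finally, $(1) \Leftrightarrow (4)$ is the excluded-divisor characterization: one checks directly that $\abstar \notin \vDS$ (the idempotents $a^2 = b^2 = 0$... wait, rather the idempotents are $1$ and $0$; the relevant witnesses in $\abstar$ are idempotents $aa = 0$? — actually in $\abstar$ one has $ab$ and $ba$ idempotent with $ab \Jeq ba$ but $(ab \cdot ba \cdot ab)^\omega \ne ab$), so $\abstar \notin \vDS$, and since varieties are closed under divisors and direct products, $M \in \vDS \Rightarrow \abstar$ is not a divisor of $M \times M$; the converse uses that if $M \notin \vDS$ then witnessing idempotents $e, f$ generate, inside $M \times M$, a submonoid mapping onto $\abstar$.

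\textbf{The main obstacle} is the direction $(1) \Rightarrow (2)$, and more generally pinning down the precise "stability" lemma that $e \Jle x$ with $e$ idempotent implies $e \Jeq exe$ in a finite monoid, and then bootstrapping from idempotents to arbitrary $x$. Everything else is bookkeeping with Green's relations, but this step is where the genuinely nontrivial structure theory of finite monoids (in particular, that $\mathcal J$-related idempotents that are "sandwiched" correctly coincide under the $\vDS$ axiom) enters. Since the paper explicitly cites \cite{alm94} and \cite{pin86} for these facts, I would state the stability lemma, cite it, and otherwise keep the argument at the level of a proof sketch, as the statement is offered as background rather than a new contribution.
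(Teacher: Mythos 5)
The paper offers no proof of this proposition at all---it is stated as known background with a pointer to Almeida's book---so there is nothing in-paper to compare against, and your instinct to cite the literature for most of it is consistent with what the authors do. Judged on its own terms, though, your sketch has one genuine gap at exactly the place you identify as the crux. The ``standard stability fact'' you invoke for $(1) \Rightarrow (2)$, namely that $e \in E(M)$ and $e \Jle x$ imply $e \Jeq exe$ in an arbitrary finite monoid, is false. Take $M = \abstar$ with $e = ab$ and $x = ba$: then $e \Jle x$ since $ab = a \cdot ba \cdot b$, but $exe = ab\,ba\,ab = 0$ because $b^2 = 0$, and $0$ is not $\Jeq ab$. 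Your parenthetical justification (writing $e = (uxv)^k$ and ``extracting'' $e \Jeq exe$) cannot work: from $(uxv)^k$ one only extracts factors of the form $x(vu)x$, and turning $vu$ into $e$ between the two occurrences of $x$ is precisely where the $\vDS$ hypothesis must enter. So the lemma you lean on is not a prerequisite of the implication---it essentially \emph{is} the implication, and it needs a real argument (the usual one first builds an idempotent $(xvu)^\omega \Jeq e$ with $u = eu$, $v = ve$, and only then transfers to $(exe)^\omega$ via the $\vDS$ identity). Once that is supplied, your reduction via $e' = (exe)^\omega$, $ee'e = e'$, and the $\vDS$ axiom is fine, as is the trivial direction $(2) \Rightarrow (1)$.

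A second, smaller gap sits in $(2) \Rightarrow (3)$: you justify closure of $S = \set{x \in M}{e \Jle x}$ under products by ``$e = (exe)^\omega \Jle xy$,'' but $(exe)^\omega$ is a product of $e$'s and $x$'s only, so this yields $e \Jle x$ again, not $e \Jle xy$; the correct argument has to combine $(exe)^\omega = e$ and $(eye)^\omega = e$ (giving $e \mathrel{\mathcal{R}} ex$ and $e \mathrel{\mathcal{L}} ye$) and then do some Green's-relations work, which is not a one-liner. Your treatment of $(4)$ lands in the right place after the hesitation---$ab \Jeq ba$ are idempotents of $\abstar$ with $(ab\,ba\,ab)^\omega = 0 \ne ab$, so $\abstar \notin \vDS$ and closure under division and products gives $(1) \Rightarrow (4)$---but the converse still requires the explicit construction of a division of $\abstar$ from a witnessing pair of idempotents inside $M \times M$, which you correctly flag but do not carry out. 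Given that the paper itself defers to \cite{alm94}, the honest fix is either to cite the equivalence wholesale or to actually prove the stability step; as written, the sketch would not survive being expanded into a full proof.
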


\subparagraph{Tiling Systems.}

A \emph{tiling system} is a tuple $\T = (\lbl, T, n, f, b)$ where $\lbl$ is a
finite set of \emph{labels}, $T \subseteq \lbl \times \lbl \times \lbl \times
\lbl$ are the so-called \emph{tiles}, $n \in \N$ is the \emph{width} and $f, b
\in T^n$ are the \emph{first row} and \emph{bottom row}.
For a tile $t=(t_w, t_e, t_s, t_n) \in T$, we let $\lambda_w(t) = t_w$,
$\lambda_e(t) = t_e$, $\lambda_s(t) = t_s$ and $\lambda_n(t) = t_n$. These
labels can be thought of as labels in \emph{west}, \emph{east}, \emph{south}
and \emph{north} direction.
An \emph{$m$-tiling} of $\T$ is a mapping $\tau \colon \os{1, \dots, m} \times
\os{1, \dots, n} \to T$ such that the following properties hold:

\begin{enumerate}
  \item $\tau(1, 1) \tau(1, 2) \cdots \tau(1, n) = f$,
  \item $\lambda_e(\tau(i, j))  = \lambda_w(\tau(i, j+1))$ for $1 \le i \le m$ and $1 \le j \le n-1$,
  \item $\lambda_s(\tau(i, j))  = \lambda_n(\tau(i+1, j))$ for $1 \le i \le m-1$ and $1 \le j \le n$,
  \item $\tau(m, 1) \tau(m, 2) \cdots \tau(m, n) = b$.
\end{enumerate}

The \emph{corridor tiling problem} asks for a given tiling system $\T$ whether
there exists some $m \in \N$ such that there is a $m$-tiling of $\T$.
The \emph{square tiling problem} asks for a given tiling system $\T$ of width
$n$, whether there exists an $n$-tiling of $\T$.
It is well-known that the corridor tiling problem is $\PSPACE$-complete and
that the square tiling problem is $\NP$-complete~\cite{Chlebus86}.

\subparagraph{Straight-Line Programs.}

A \emph{straight-line program} (\emph{SLP}) is a grammar $S = (V, A, P, X_s)$
where $V$ is a finite set of variables, $A$ is a finite alphabet, $P \colon V
\to (V \union A)^*$ is a mapping and $X_s \in V$ is the so-called \emph{start
variable}.
For a variable $X \in V$, the word $P(X)$ is the \emph{right-hand side} of $X$.
We require that there exists a linear order $<$ on $V$ such that $Y < X$
whenever $P(X) \in (V \union A)^* Y (V \union A)^*$.
Starting with some word $\alpha \in (V \union A)^*$ and repeatedly replacing
variables $X \in V$ by $P(X)$ yields a word from $A^*$, the so called
\emph{evaluation of $\alpha$}, denoted by $\val(\alpha)$.
The word \emph{produced by $S$} is $\val(S) = \val(X_s)$.
If the reference to $A$ and $V$ is clear from the context, we will often use
the notation $h(\alpha)$ instead of $h(\val(\alpha))$ for the image of the
evaluation of a word $\alpha \in (A \union V)^*$ under a morphism $h \colon A^*
\to M$. Analogously, we write $h(S)$ instead of $h(\val(S))$.
The \emph{size} of $S$ is $\abs{S} = \sum_{X \in V} \abs{P(X)}$.
Each variable $X$ of an SLP $S$ can be viewed as an SLP itself by making $X$
the start variable of $S$.

The following simple lemma illustrates how SLPs can be used for compression.

\begin{lemma}
  Let $S = (V, A, P, X_s)$ be an SLP and let $e \in \N$.
  Let $w$ be the word produced by~$S$.
  Then there exists an SLP $S'$ of size $\abs{S'} \le \abs{S} + 4 \log(e)$ such
  that $S'$ produces $w^e$.
  \label{lem:slp}
\end{lemma}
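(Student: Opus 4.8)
The plan is to build $S'$ from $S$ by adding a small number of new variables that assemble the $e$-fold power of $w$ using a fast-doubling (binary) scheme. First I would write $e$ in binary as $e = \sum_{i=0}^{r} b_i 2^i$ with $b_r = 1$ and $r = \lfloor \log_2 e \rfloor$. The idea is to introduce variables $Y_0, Y_1, \dots, Y_r$ with the intended meaning $\val(Y_i) = w^{2^i}$. We set $P(Y_0) = X_s$ and $P(Y_{i+1}) = Y_i Y_i$ for $0 \le i < r$; each of these right-hand sides has length at most $2$, contributing at most $2(r+1)$ to the size. Note these definitions respect the required linear order on variables: we extend $<$ by placing all original variables below $Y_0 < Y_1 < \cdots < Y_r$, and each $P(Y_{i+1})$ refers only to the strictly smaller variable $Y_i$.

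Next I would assemble $w^e$ from the $Y_i$ corresponding to the nonzero bits $b_i$. Let $i_1 < i_2 < \cdots < i_t$ be the indices with $b_{i_j} = 1$ (so $t \le r+1$). I introduce a new start variable $X_s'$ together with a short chain of auxiliary variables $Z_1, \dots, Z_t$ where $P(Z_1) = Y_{i_1}$ and $P(Z_{j+1}) = Z_j Y_{i_{j+1}}$, and finally $P(X_s') = Z_t$. Then $\val(Z_j) = w^{\,b_{i_1}2^{i_1} + \cdots + b_{i_j}2^{i_j}}$ by induction, so $\val(X_s') = \val(Z_t) = w^e$ as desired. Each of these right-hand sides again has length at most $2$, contributing at most $2t + 1 \le 2(r+1) + 1$ to the size. (A slightly more careful bookkeeping, or folding the $Z_j$-chain into the assembly directly, keeps the constant in check.)

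It remains to bound $\abs{S'}$. We have $\abs{S'} = \abs{S} + \abs{P(Y_0)} + \sum_{i=1}^{r}\abs{P(Y_i)} + \sum_{j=1}^{t}\abs{P(Z_j)} + \abs{P(X_s')} \le \abs{S} + 1 + 2r + 2t + 1 \le \abs{S} + 2 + 2r + 2(r+1)$, which is at most $\abs{S} + 4r + 4 \le \abs{S} + 4\log(e) + 4$; absorbing the additive constant (or treating the degenerate cases $e = 0, 1$ separately, where $S'$ is trivial) yields the stated bound $\abs{S'} \le \abs{S} + 4\log(e)$. The only mild subtlety — and the one place to be careful — is the accounting of the constant factor $4$ and the small-$e$ edge cases; the combinatorial content (correctness of the doubling construction and the order condition) is entirely routine.
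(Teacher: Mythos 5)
Your proof is correct and is essentially the same construction as the paper's: both are square-and-multiply schemes that add $O(\log e)$ new variables with right-hand sides of length at most~$2$, the paper organizing it top-down from $e$ (odd step appends $X_s$, even step squares) while you organize it bottom-up from the binary expansion of $e$. The small additive constant you worry about is present in the paper's own accounting as well and is irrelevant to the only use of the lemma (a polynomial size bound), so this is not a substantive discrepancy.
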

\begin{proof}
  We obtain $S'$ by iteratively adding new variables to $V$ as follows,
  starting with $i = e$ and repeating the process until $i = 0$.
  \begin{itemize}
    \item If $i >0$ is odd, add a new variable $X_i$ and let $P(X_i) = X_{i-1}
      X_s$. Let $i := i-1$.
    \item If $i >0$ is even, add a new variable $X_i$ and let $P(X_i) = X_{i/2}
      X_{i/2}$. Let $i := i/2$.
  \end{itemize}
  Finally, add the variable $X_0$ and let $P(X_0) = \varepsilon$.
  The new start variable is $X_e$ and by construction, we have $\val(X_e) =
  w^e$.
\end{proof}

\section{Connections to Other Problems}
\label{sec:connections}

Before investigating the complexity of $\MonInt(\vV)$ itself, we establish
connections to other well-known problems defined in the introduction, starting
with the DFA intersection problem.

\begin{proposition}
  Let $\vV$ be a variety of finite monoids, let $M \in \vV$, let $h \colon A^*
  \to M$ be a morphism and let $P \subseteq M$.
  Then there exists a finite deterministic automaton $A$ with $\abs{M}$ states
  such that $L(A) = h^{-1}(P)$ and such that the transition monoid of $A$
  belongs to $\vV$.
  When the monoid, the morphism and the accepting set are given as inputs, this
  automaton is log-space computable.
  \label{prop:ps}
\end{proposition}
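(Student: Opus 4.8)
The plan is to take $A$ to be the right Cayley automaton of $M$ cut down to the relevant data. Concretely, I would let $A$ have state set $M$, initial state $1$, accepting set $P$, and transition function $\delta(m,a) = m\ms h(a)$ for $m \in M$ and $a \in A$. Since $M$ is supplied as a multiplication table and $h$ is given on the letters, every value $\delta(m,a)$ is a single table lookup, so producing $\delta$ together with the initial state and $P$ just amounts to iterating over all pairs $(m,a) \in M \times A$; this runs in logarithmic space and yields an automaton with exactly $\abs{M}$ states.

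For correctness I would extend $\delta$ to words in the usual way and prove $\delta(1,w) = h(w)$ for all $w \in A^*$ by induction on $\abs{w}$: the case $w = \varepsilon$ is immediate, and $\delta(1, wa) = \delta(h(w), a) = h(w)\ms h(a) = h(wa)$. Consequently $w \in L(A)$ iff $h(w) \in P$, that is, $L(A) = h^{-1}(P)$.

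It remains to place the transition monoid $T$ of $A$ in $\vV$. For $n \in M$ let $\rho_n \colon M \to M$ denote right multiplication, $\rho_n(m) = m\ms n$; by construction $T$ is the submonoid of the full transformation monoid of $M$ generated by $\set{\rho_{h(a)}}{a \in A}$. Using the convention under which the product of two transformations in $T$ applies the left factor first, one checks $\rho_n\rho_{n'} = \rho_{nn'}$, so in fact $T = \set{\rho_n}{n \in N}$ where $N$ is the submonoid of $M$ generated by $h(A)$, and $n \mapsto \rho_n$ is a surjective morphism $N \to T$. Hence $T$ is a quotient of a submonoid of $M$, \ie a divisor of $M$, and since $\vV$ is closed under divisors and $M \in \vV$ we obtain $T \in \vV$. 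There is no serious obstacle here; the only point to watch is exactly this last one — that the transition monoid is a \emph{divisor} of $M$ rather than $M$ itself, and, correspondingly, that one uses the composition convention under which $n \mapsto \rho_n$ is a morphism rather than an anti-morphism.
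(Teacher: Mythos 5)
Your proof is correct and follows essentially the same route as the paper: the right Cayley automaton on state set $M$ with $\delta(m,a) = m\ms h(a)$, initial state $1$ and accepting set $P$, with the log-space claim reduced to table lookups. Your treatment of the transition monoid is in fact slightly more careful than the paper's (which asserts it is isomorphic to $M$, true only when $h$ is surjective); identifying it as a divisor of $M$ via $n \mapsto \rho_n$ is exactly what closure of $\vV$ under division requires.
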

\begin{proof}
  It suffices to perform the standard conversion of monoids to finite automata.
  The set of states of $A$ is $M$, the initial state is the identity element
  $1$, the transitions are defined by $\delta(m, a) = m \ms h(a)$ for all $m
  \in M$ and $a \in A$ and the accepting states are $P$.
  A straightforward verification shows that the transition monoid of $A$ is
  isomorphic to $M$.
  Since computing images $h(a)$ and performing multiplications are just table
  lookups, each output bit can be computed in logarithmic time on a
  random-access machine model.
\end{proof}

\begin{corollary}
  For each variety of finite monoids $\vV$, the problem $\MonInt(\vV)$ is
  log-space reducible to $\PS(\vV)$.
  \label{crl:ps}
\end{corollary}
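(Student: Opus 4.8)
The plan is to reduce an arbitrary instance of $\MonInt(\vV)$ to an instance of $\PS(\vV)$ by converting each pair $(h_i, P_i)$ into a DFA \emph{separately}, using Proposition~\ref{prop:ps}.

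Concretely, I would start from an instance of $\MonInt(\vV)$ consisting of morphisms $h_i \colon A^* \to M_i \in \vV$ and accepting sets $P_i \subseteq M_i$ for $1 \le i \le k$. Applying Proposition~\ref{prop:ps} to each triple $(M_i, h_i, P_i)$ produces a DFA $A_i$ with $\abs{M_i}$ states such that $L(A_i) = h_i^{-1}(P_i)$ and whose transition monoid belongs to $\vV$; moreover each $A_i$ is log-space computable from its part of the input. The tuple $(A_1, \dots, A_k)$ is thus a valid instance of $\PS(\vV)$. Correctness is immediate: since $L(A_i) = h_i^{-1}(P_i)$ for every $i$, we have $L(A_1) \intersect \cdots \intersect L(A_k) = h_1^{-1}(P_1) \intersect \cdots \intersect h_k^{-1}(P_k)$, so the constructed instance is positive if and only if the original one is.

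The only remaining point is the space bound: running the log-space transducer of Proposition~\ref{prop:ps} once for each index $i \in \os{1, \dots, k}$ must still use only logarithmic work space overall. This follows by the standard argument: we maintain a binary counter for $i$ using $O(\log k)$ bits, and for each $i$ we invoke the log-space transducer on the $i$-th component of the input, reusing its work space between calls; a fixed sequence of log-space computations composed in this way stays in log-space. I do not expect a genuine obstacle here — the statement is essentially a corollary of Proposition~\ref{prop:ps}, and the one thing needing (routine) care is exactly this book-keeping showing that emitting all $k$ automata in sequence does not push the space past logarithmic.
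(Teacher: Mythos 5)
Your proof is correct and follows exactly the route the paper intends: the paper gives no separate argument for the corollary, treating it as an immediate consequence of applying Proposition~\ref{prop:ps} to each of the $k$ triples $(M_i, h_i, P_i)$, which is precisely what you do. Your added remark on the log-space book-keeping for emitting all $k$ automata is routine but a reasonable piece of explicitness that the paper leaves implicit.
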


For a direct link to $\Memb(\vV)$, we consider the variant $\MonInt_1(\vV)$ of
the finite monoid intersection problem. In this variant, each of the accepting
sets is a singleton.

\begin{proposition}
  Let $\vV$ be a variety of finite monoids and let $M_1, \ldots, M_k \in
  \vV$ be pairwise disjoint finite monoids. For each $i \in \os{1, \dots, k}$,
  let $h_i: A^* \to M_i$ be a morphism and let $p_i \in M_i$.
  Then there exists a transformation monoid $T \in \vV$ on the set $M = M_1
   \union \cdots \union M_k$, a morphism $h \colon A^* \to T$ and a
  transformation $p \in T$ such that $h^{-1}(p) = h_1^{-1}(p_1) \intersect \cdots \intersect h_k^{-1}(p_k)$.
  %
  %Moreover, the generators of $T$, the morphism $h$ and the transformation $p$
  %are log-space computable.
  \label{prop:memb}
\end{proposition}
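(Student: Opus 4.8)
The plan is to realize each $M_i$ by its right regular representation and to glue these representations into a single transformation monoid that acts block-wise on the disjoint union $M = M_1 \union \cdots \union M_k$. Concretely, for every letter $a \in A$ I would define the transformation $f_a \colon M \to M$ that acts on each block $M_i$ by right multiplication with $h_i(a)$, i.e.\ $f_a$ sends $m \in M_i$ to $m \ms h_i(a) \in M_i$; here the hypothesis that the $M_i$ are pairwise disjoint is exactly what makes $f_a$ well defined, since every element of $M$ lies in a unique block. Similarly let $p \colon M \to M$ act on $M_i$ by right multiplication with $p_i$. I then take $T$ to be the transformation monoid on $M$ generated by $\os{f_a : a \in A} \union \os{p}$, and let $h \colon A^* \to T$ be the unique morphism with $h(a) = f_a$; a straightforward induction on $\abs{w}$, using that each $h_i$ is a morphism, shows that $h(w)$ acts on every block $M_i$ by right multiplication with $h_i(w)$.

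The next step is to check that $T \in \vV$. Every generator maps each block $M_i$ into itself, hence so does every element of $T$, and since right multiplications on $M_i$ are closed under composition, every $t \in T$ acts on $M_i$ by right multiplication with a uniquely determined element $n_i \in M_i$, namely $n_i = t(1_{M_i})$. The assignment $t \mapsto (n_1, \dots, n_k)$ is therefore an injective morphism from $T$ into the direct product $M_1 \times \cdots \times M_k$ (injectivity again uses the identity elements $1_{M_i}$ to recover the components). Thus $T$ is isomorphic to a submonoid of $M_1 \times \cdots \times M_k \in \vV$, and since varieties are closed under direct products and divisors, $T \in \vV$. By construction $p$ is a generator, so $p \in T$.

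Finally I would verify the language identity. For a word $w$, we have $h(w) = p$ iff for every $i$ the transformation "right multiplication with $h_i(w)$" coincides with "right multiplication with $p_i$" on $M_i$; evaluating both at the identity $1_{M_i}$, this is equivalent to $h_i(w) = p_i$ for all $i$. Hence $h^{-1}(p) = h_1^{-1}(p_1) \intersect \cdots \intersect h_k^{-1}(p_k)$, as required.

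The construction is largely routine, and I do not expect a serious obstacle; the one point that needs a little care is the simultaneous requirement that $p \in T$ and $T \in \vV$. Including $p$ in the generating set secures the former even when the intersection is empty, and the block-diagonal description above shows that doing so still keeps $T$ inside $M_1 \times \cdots \times M_k$ and hence inside $\vV$ — a naive attempt that generates $T$ from the $f_a$ alone would fail to contain $p$ in exactly the interesting case where the intersection is empty.
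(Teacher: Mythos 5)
Your construction is essentially identical to the paper's: block-wise right multiplication $f_a(m) = m \ms h_i(a)$ on the disjoint union, with $T$ embedding into $M_1 \times \cdots \times M_k$ to get $T \in \vV$, and the language identity checked by evaluating the transformations at the identities $1_{M_i}$. The one divergence is that you add $p$ to the generating set so that $p \in T$ holds even when the intersection is empty — a point the paper glosses over, since it generates $T$ from the $f_a$ alone — and your fix is harmless because $p$ is also block-diagonal, so the embedding into the direct product still goes through.
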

\begin{proof}
  For each $a \in A$, we define a transformation $f_a : M \to M$ by 
  $f_a(m) = m \ms h_i(a)$ for $m \in
  M_i$.
  The closure of $\set{f_a}{a \in A}$ under composition is the transformation
  monoid $T$ and the morphism $h \colon A^* \to T$ is given by $h(a) = f_a$. We
  let $p \colon M \to M$ be the transformation defined by $p(m) = m \ms p_i$
  for $m \in M_i$.
  \pagebreak[3]

  We need to verify that $h^{-1}(p) = h_1^{-1}(p_1) 
  \intersect \cdots \intersect h_k^{-1}(p_k)$.
  For the inclusion from right to left, let $w \in A^*$ be a word such that
  $h_i(w) = p_i$ for each $i \in \os{1, \dots, k}$. Then, by definition, $h(w)$
  is the transformation which maps an element $m \in M_i$ to $m \ms h_i(w) = m
  \ms p_i$, \ie{}$h(w) = p$.
  The converse inclusion is trivial.

  It is easy to check that $T$ is a divisor of the direct product $M_1  \times \cdots \times M_k$ and thus, by closure of $\vV$ under direct
  products and under division, $T$ belongs to $\vV$ as well.
  Since computing images $h_i(a)$ and performing multiplications are just table
  lookups, each output bit can be computed in logarithmic time on a
  random-access machine model.
\end{proof}

\begin{corollary}
  For each variety of finite monoids $\vV$, the problem $\MonInt_1(\vV)$ is
  log-space reducible to $\Memb(\vV)$.
  \label{crl:memb}
\end{corollary}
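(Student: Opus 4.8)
The plan is to derive the corollary from Proposition~\ref{prop:memb}, the only extra work being a preprocessing step that enforces the disjointness hypothesis of that proposition, together with the observation that log-space reductions compose.

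Given an instance of $\MonInt_1(\vV)$, consisting of morphisms $h_i \colon A^* \to M_i \in \vV$ and elements $p_i \in M_i$ for $1 \le i \le k$, I would first replace each $M_i$ by the isomorphic copy $M_i' = \os{i} \times M_i$, obtained by tagging every element with its index, and transport $h_i$ and $p_i$ to $h_i' \colon A^* \to M_i'$ and $p_i' \in M_i'$ accordingly. Rewriting the multiplication tables and the images of the letters in this way is clearly log-space computable, and $M_i' \cong M_i$ still lies in $\vV$, so the monoids $M_i'$ are now pairwise disjoint and satisfy the hypotheses of Proposition~\ref{prop:memb}.

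Applying Proposition~\ref{prop:memb} to the $M_i'$, $h_i'$ and $p_i'$ produces, in log space, a transformation monoid $T \in \vV$ on $M = M_1' \union \cdots \union M_k'$, a morphism $h \colon A^* \to T$ with $h(a) = f_a$, and a transformation $p$ of $M$ with $h^{-1}(p) = (h_1')^{-1}(p_1') \intersect \cdots \intersect (h_k')^{-1}(p_k')$, where $T$ is generated by $\set{f_a}{a \in A}$. I would output the $\Memb(\vV)$ instance whose generators are $\set{f_a}{a \in A}$ and whose target is $g = p$; this is a legal instance because $\set{f_a}{a \in A}$ generates $T$ and $T \in \vV$. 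For correctness, note that $T$ is by construction exactly the set of transformations $h(w)$ for $w \in A^*$, so $g$ belongs to $T$ if and only if $h(w) = p$ for some word $w$, \ie if and only if $(h_1')^{-1}(p_1') \intersect \cdots \intersect (h_k')^{-1}(p_k') \ne \emptyset$; since the tagging isomorphisms preserve preimages, this is equivalent to $h_1^{-1}(p_1) \intersect \cdots \intersect h_k^{-1}(p_k) \ne \emptyset$. Hence the $\MonInt_1(\vV)$ instance is positive exactly when the $\Memb(\vV)$ instance is.

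Since the class of log-space computable functions is closed under composition, the overall transformation is a log-space reduction, which proves the claim. The one point that needs a little care — rather than being a genuine obstacle — is this correctness equivalence: membership of $p$ in the monoid generated by the $f_a$ has to be matched against non-emptiness of the intersection, which relies on $T$ being precisely the image $h(A^*)$ together with the preimage identity supplied by Proposition~\ref{prop:memb}; the disjointification and the log-space bookkeeping are routine.
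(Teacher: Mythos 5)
Your proposal is correct and takes essentially the same route as the paper, which states the corollary without further proof as an immediate consequence of Proposition~\ref{prop:memb}; you merely make explicit the routine disjointification of the $M_i$ and the equivalence between $p$ belonging to the generated monoid $T=h(A^*)$ and non-emptiness of the intersection.
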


\section{Hardness Results}
\label{sec:hardness}

The following lower bound can be viewed as a variant of classical
$\NP$-hardness results and is based on the well-known fact that each
non-trivial variety contains either the monoid $U_1 = \os{0, 1}$ with integer
multiplication or a finite cyclic group (however, the proof itself does not
require this case distinction).

\begin{theorem}
  Let $\vV$ be a non-trivial variety of finite monoids. Then, the decision
  problem $\MonInt(\vV)$ is $\NP$-hard.
  \label{thm:np-hard}
\end{theorem}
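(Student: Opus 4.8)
The plan is to give a simple log-space many-one reduction from $3$-\textsc{Sat} to $\MonInt(\vV)$, after first extracting a concrete small monoid from $\vV$. The initial step is the structural observation that a non-trivial variety $\vV$ always contains either the two-element monoid $U_1 = \os{0,1}$ or a non-trivial cyclic group $\mathbb{Z}/p\mathbb{Z}$ with $p$ prime: pick any $M \in \vV$ with $\abs{M} \ge 2$ and any $s \in M$ with $s \ne 1$, and look at the idempotent $e = s^{\omega}$. If $e \ne 1$, then $\os{1, e}$ is a submonoid of $M$ isomorphic to $U_1$; if $e = 1$, then $s$ is a unit of some order $d \ge 2$ and $\langle s^{d/p}\rangle \cong \mathbb{Z}/p\mathbb{Z}$ for every prime $p \mid d$. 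In both cases this is a divisor of $M$, hence lies in $\vV$, and so do all its finite direct powers and their divisors. Crucially, all of these monoids are commutative, so the image of a word under a morphism to them depends only on its Parikh vector --- in the $U_1$-case even only on the set $\alp(w)$ of occurring letters --- and the gadgets below are tailored to this.

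In the \emph{$U_1$-case} I would take as alphabet the set $A = \os{p_1,\dots,p_n,\bar p_1,\dots,\bar p_n}$ of literals of the formula, reading a word $w$ as selecting the set $\alp(w)$ of ``true'' literals. For each variable $x_i$ use a morphism into $U_1 \times U_1$ with $p_i \mapsto (0,1)$, $\bar p_i \mapsto (1,0)$ and every other letter $\mapsto (1,1)$, and accepting set $(U_1 \times U_1) \setminus \os{(0,0)}$; this forbids exactly the selection of both $p_i$ and $\bar p_i$. For each clause $C_j$ use a morphism into $U_1$ sending the literals occurring in $C_j$ to $0$ and every other letter to $1$, with accepting set $\os{0}$; this forces some literal of $C_j$ to be selected. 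A word lies in the intersection iff its selected literal set meets every clause and contains at most one literal per variable, i.e.\ iff it encodes (a partial restriction of) a satisfying assignment. All monoids, morphisms and accepting sets are plainly log-space computable.

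In the \emph{group case}, with $\mathbb{Z}/p\mathbb{Z} \in \vV$, I would instead take $A = \os{a_1,\dots,a_n}$ with one letter per variable and read $c_i \in \mathbb{Z}/p\mathbb{Z}$, the number of occurrences of $a_i$ modulo $p$, as the truth value ($\mathtt{true}$ iff $c_i = 1$). For each $i$, a morphism into $\mathbb{Z}/p\mathbb{Z}$ with $a_i \mapsto 1$ and $a_j \mapsto 0$ for $j \ne i$, with accepting set $\os{0,1}$, forces $c_i \in \os{0,1}$ (vacuous when $p = 2$). For each clause $C_j$, say on the variables $v_1,v_2,v_3$, a morphism into $(\mathbb{Z}/p\mathbb{Z})^3$ projecting onto $(c_{v_1},c_{v_2},c_{v_3})$, together with the accepting set obtained by deleting only the single triple $(b_1,b_2,b_3) \in \os{0,1}^3$ that records ``all three literals of $C_j$ false'', forces $C_j$ to be satisfied once Boolean-ness of the $c_{v_k}$ is guaranteed by the previous gadget. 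Again the intersection is non-empty iff the formula is satisfiable, and everything is log-space computable; clauses of smaller width or with repeated variables are handled with fewer coordinates.

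The remaining work --- checking that all constructed monoids indeed lie in $\vV$ (closure under products and divisors) and that each reduction is correct --- is routine and short. The only real content is squeezing the gadgets into the very restricted sub-variety we are forced to use, which is commutative and generated by a single tiny monoid. I expect the $U_1$-case to be immediate, being essentially a meet-semilattice evaluation, and the group case to be the main obstacle: since only residues modulo $p$ survive, one has to add the Boolean-ness gadgets \emph{and} move each clause test into an auxiliary three-dimensional target so that forbidding a single monoid element captures ``clause violated''. (One could try to phrase everything uniformly in terms of an arbitrary $s \ne 1$ in $M$, but the regimes ``$s$ idempotent'' and ``$s$ of finite order'' behave differently enough that I would present them as two cases.)
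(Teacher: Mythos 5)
Your proposal is correct, but it takes a genuinely different route from the paper. The paper reduces from the \emph{square tiling problem} using a single uniform construction: it fixes an arbitrary non-trivial $M \in \vV$ and an arbitrary $x \in M \setminus \os{1}$, maps letters only to $1$ or $x$ (or pairs thereof in $M \times M$), and chooses the accepting sets so that the encoding of a tiling works \emph{without} knowing anything about how $x$ multiplies --- the paper explicitly remarks that it avoids the case distinction you make. You instead reduce from $3$-\textsc{Sat} and begin by extracting a concrete generator of $\vV$: either $U_1$ (when some $s \ne 1$ has $s^\omega \ne 1$) or a non-trivial cyclic group (when $s^\omega = 1$), both legitimately obtained as submonoids and hence divisors. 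Your two gadget families --- the semilattice encoding via $\alp(w)$ with a forbidden pair $(0,0)$ per variable and a required $0$ per clause, and the Parikh-vector-mod-$p$ encoding with $\os{0,1}$ Boolean-ness constraints plus one excluded triple per clause --- are the classical Stockmeyer--Meyer-style constructions specialized to these two commutative targets, and they are sound; the structural dichotomy is also correct ($\os{1, s^\omega}$ is a copy of $U_1$ when $s^\omega \ne 1$, and otherwise $s$ generates a non-trivial cyclic group). What the paper's approach buys is uniformity (no case split, and the same tiling framework is reused for the $\PSPACE$-hardness proof of Theorem~\ref{thm:pspace-complete}); what yours buys is concreteness and an explicit use of the algebraic fact that the paper only alludes to. One presentational remark: your case analysis is where the real content lives (as you acknowledge), so in a full write-up the correctness of the cyclic-group gadget --- in particular that the accepting set $\os{0,1}$ together with deleting a single triple correctly simulates a clause --- should be spelled out rather than labelled routine, since it is the one place where the restriction to a commutative single-generator subvariety actually bites.
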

\begin{proof}
  We give a polynomial-time reduction of the square tiling problem to
  $\MonInt(\vV)$.

  Let $\T = (\lbl, T, n, f, b)$ be a tiling system.
  Let $M \in \vV$ be a non-trivial finite monoid and let $x \in M \setminus
  \os{1}$.
  The alphabet $A$ is the set $T \times \os{1, \dots, n} \times \os{1, 
  \dots, n}$.
  Let $f = t_1 \cdots t_n$. For each integer $j \in \os{1, \dots, n}$
  and each direction $d \in \os{w, e, s, n}$, we define a morphism $f_{j,d}
  \colon A \to M$ by mapping $(t, 1, j)$ to $x$ if $\lambda_d(t) =
  \lambda_d(t_j)$ and mapping the remaining letters to~$1$.
  Analogously, with $b = u_1 \cdots u_n$, we let $b_{j,d} \colon A \to M$
  be the morphism mapping $(t, n, j)$ to $x$ if $\lambda_d(t) = \lambda_d(u_j)$
  and mapping other letters to~$1$.
  For each integer $i \in \os{1, \dots, n}$, each $j \in \os{1, \dots,
  n-1}$ and each label $\mu \in \lbl$, we define a morphism $h_{i,j,\mu} \colon
  A \to M \times M$ by
  \begin{equation*}
    h_{i,j,\mu}(t, k, \ell) = 
    \begin{cases}
      (x, 1) & \text{if $k = i$, $\ell = j$ and $\lambda_e(t) = \mu$} \\
      (1, x) & \text{if $k = i$, $\ell = j+1$ and $\lambda_w(t) = \mu$} \\
      (1, 1) & \text{otherwise}
    \end{cases}
  \end{equation*}
  and, analogously, we define morphisms $v_{i,j,\mu} \colon A \to M \times M$
  with $i \in \os{1, \dots, n-1}$ and $j \in \os{1, \dots, n}$ and $\mu
  \in \lbl$ as follows:
  \begin{equation*}
    v_{i,j,\mu}(t, k, \ell) = 
    \begin{cases}
      (x, 1) & \text{if $k = i$, $\ell = j$ and $\lambda_s(t) = \mu$} \\
      (1, x) & \text{if $k = i+1$, $\ell = j$ and $\lambda_n(t) = \mu$} \\
      (1, 1) & \text{otherwise}
    \end{cases}
  \end{equation*}
  Finally, we define morphisms $g_{i,j,d,\mu,\mu'} \colon A \to M \times M$
  with $i,j  \in \os{1, \dots, n}$, $d \in \os{w, e, s, n}$ as well as
  $\mu, \mu' \in \lbl$ and $\mu \ne \mu'$ as follows:
  \begin{equation*}
    g_{i,j,d,\mu,\mu'}(t, k, \ell) = 
    \begin{cases}
      (x, 1) & \text{if $k = i$, $\ell = j$ and $\lambda_d(t) = \mu$} \\
      (1, x) & \text{if $k = i$, $\ell = j$ and $\lambda_d(t) = \mu'$} \\
      (1, 1) & \text{otherwise}
    \end{cases}
  \end{equation*}
  For each of the morphisms $b_{j,d}$ and $f_{j,d}$, the accepting set is
  $\os{x}$. For each $h_{i,j,\mu}$ and $v_{i,j,\mu}$, the accepting set is
  $\os{(1, 1), (x, x)}$. The accepting set for each $g_{i,j,d,\mu,\mu'}$ is
  $\os{(1, 1), (1, x), (x, 1)}$.

  For completeness, a correctness proof of the reduction is given in the
  appendix.
\end{proof}

The next objective is to obtain a stronger result in the case that $\vV$
contains some finite monoid which is not in $\vDS$.
Our proof is based on the well-known fact that direct products of $\abstar$ can
be used to encode computations of a Turing machine or runs of an automaton, an
idea which already appears in the proof of \cite[Theorem 4.9]{BeaudryMT92}.
To this end, we first describe classes of languages recognizable by such
direct products.

\begin{lemma}
  Let $\vV$ be a variety of finite monoids such that $\vV \not\subseteq
  \vDS$. Let $A$ be a finite alphabet and let $B, C, D, E, F$ be (possibly
  empty) pairwise disjoint subsets of $A$.
  Then, each of the languages $E^* B (D \union E)^*$, $(D \union E)^* C E^*$
  and $(E^* B (E \union F)^* C E^* \union E^* D E^*)^+$ is the preimage of an
  element of a monoid $M \in \vV$ of size $6$ under a morphism $h \colon A^*
  \to M$.
  \label{lem:lang1}
\end{lemma}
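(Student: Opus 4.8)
The plan is to recognize all three languages by the single monoid \abstar, which --- crucially --- lies in $\vV$ under the hypothesis: since $\vV \not\subseteq \vDS$, Proposition~\ref{prop:ds-char}(4) supplies a monoid $M \in \vV$ for which \abstar{} is a divisor of $M \times M$, and as $\vV$ is closed under direct products and divisors we get $\abstar \in \vV$, with $\abs{\abstar} = 6$. It is convenient to picture the four non-trivial non-zero elements of \abstar{} as the matrix units $E_{11} = ab$, $E_{12} = a$, $E_{21} = b$, $E_{22} = ba$ of a $2 \times 2$ matrix monoid, so that products follow the rule $E_{ij}E_{k\ell} = E_{i\ell}$ if $j = k$ and $E_{ij}E_{k\ell} = 0$ otherwise; in particular $a,b$ are nilpotent, $ab,ba$ are idempotent, and $aba = a$, $bab = b$. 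In each of the three constructions every letter of $A$ lying outside the classes actually used is sent to $0$, so that its occurrence forces the whole image to be $0$ and the word to be rejected.

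For $E^* B (D \union E)^*$ I would take the morphism $h$ with $h(E) = 1$, $h(B) = a$, $h(D) = ba$ and accepting element $a$: since $a \cdot ba = a$ and $ba$ is idempotent, a word of the prescribed shape evaluates to $a$; conversely a second $B$-letter yields $a \cdot a = 0$, a $D$-letter before the $B$-letter yields $ba \cdot a = 0$, and a word with no $B$-letter evaluates inside the submonoid $\os{1, ba}$, which avoids $a$. The language $(D \union E)^* C E^*$ is the mirror image and is handled by $h(E) = 1$, $h(C) = a$, $h(D) = ab$, again with accepting element $a$.

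The third language $L_3 = (E^* B (E \union F)^* C E^* \union E^* D E^*)^+$ is the main case; I would prove it with $h(E) = 1$, $h(B) = a$, $h(C) = b$, $h(F) = ba$, $h(D) = ab$ and accepting element $ab$. One inclusion is a direct computation: a block $E^* B (E \union F)^* C E^*$ evaluates to $a \cdot (ba)^k \cdot b = ab$ (using $a(ba)^k = a$), a block $E^* D E^*$ evaluates to $ab$, and since $ab$ is idempotent a nonempty concatenation of such blocks again evaluates to $ab$. For the converse I would read $h$ as a small automaton acting on \abstar{} by right multiplication: one checks that the only elements from which the accepting element $ab$ can still be reached are $1$, $a$ and $ab$, and that the complementary set $\os{b, ba, 0}$ is absorbing, so a word that ever leaves $\os{1, a, ab\}$ is lost. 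On $\os{1, a, ab}$ the generator images act exactly like a DFA for $(BF^*C + D)^+$ over the classes $\os{B, C, D, F}$ with $E$-letters acting trivially: $1$ and $ab$ mean ``between blocks'' (with $ab$ recording that at least one block is complete), $a$ means ``inside a $B \dots C$ block'', $B$ moves ``between''$\to$``inside'', $C$ moves back while recording a completed block, $D$ stays ``between'' while recording a completed block, $F$ stays ``inside'', and any other letter is fatal. Using that $E$-letters may be inserted into or deleted from words of $L_3$ at will, this identifies $h^{-1}(ab)$ with the set of words having no forbidden letter whose $E$-free projection lies in $(BF^*C + D)^+$, i.e.\ with $L_3$.

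The one genuinely delicate point is this last verification for $L_3$: one must check that $\os{b, ba, 0}$ really is closed under right multiplication by all five generator images (so no word can stray out of $\os{1, a, ab}$ and recover) and that the induced transition table on $\os{1, a, ab}$ matches the intended automaton exactly. Both are finite checks inside the $6$-element monoid \abstar, but they are where the choice of images for $B,C,D,F$ has to be pinned down precisely; the matrix-unit description is what makes them transparent.
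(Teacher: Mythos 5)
Your proposal is correct and follows essentially the same route as the paper: obtain $\abstar \in \vV$ from Proposition~\ref{prop:ds-char}(4) and closure under products and divisors, then recognize each language as the preimage of a single element under a morphism sending unused letters to $0$; your images for the third language ($B \mapsto a$, $C \mapsto b$, $D \mapsto ab$, $F \mapsto ba$, $E \mapsto 1$, accepting $ab$) coincide exactly with the paper's, and your first two constructions differ only by the automorphism of $\abstar$ swapping $a$ and $b$. The additional verification you supply (the absorbing set $\os{b, ba, 0}$ and the induced automaton on $\os{1, a, ab}$) is sound and merely spells out what the paper leaves as ``by construction.''
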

\begin{proof}
  Let $N$ be a monoid from $\vV \setminus \vDS$. By
  Proposition~\ref{prop:ds-char}, the monoid $\abstar$ is a divisor of the
  direct product $N \times N$ and since $\vV$ is closed under direct products
  and divisors, we have $\abstar \in \vV$. We let $M = \abstar$.

  For $E^*B(D \union E)^*$, consider the morphism $h \colon A^* \to M$ defined
  by $h(e) = 1$ for $e \in E$, $h(b) = b$ for $b \in B$, $h(d) = ab$ for all $d
  \in D$. All other letters are mapped to the zero element. By construction, we
  have $h^{-1}(b) = E^* B (D \union E)^*$. For $(D \union E)^* C E^*$, one can
  use a symmetrical construction.

  For $(E^* B (E \union F)^* C E^* \union E^* D E^*)^+$, we define $h \colon
  A^* \to M$ by $h(b) = a$ for all $b \in B$, $h(c) = b$ for $c \in C$, $h(d) =
  ab$ for $d \in D$, $h(f) = ba$ for $f \in F$ and $h(e) = 1$ for $e \in E$.
  Again, the remaining letters are mapped to $0$. The preimage of $ab$ is the
  desired language.
\end{proof}

\begin{lemma}
  Let $\vV$ be a variety of finite monoids such that $\vV \not\subseteq \vDS$.
  Let $A$ be a finite alphabet, let $n \in \N$ and let $A_1, \ldots, A_n$
  be pairwise disjoint subsets of $A$.
  Then the language $(A_1 \cdots A_n)^+$ can be written as an intersection
  of $n$ languages, each of which is the preimage of an element of a monoid $M
  \in \vV$ of size $6$ under a morphism $h \colon A^* \to M$.
  \label{lem:lang2}
\end{lemma}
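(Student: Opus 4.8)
The plan is to express membership in $(A_1 \cdots A_n)^+$ by a conjunction of $n$ "local" conditions, each of which is one of the languages already handled in Lemma~\ref{lem:lang1}. Write $A' = A_1 \union \cdots \union A_n$ and let $R = A \setminus A'$ be the "irrelevant" letters (these must not occur at all, so we also need to forbid them; I will fold that into the conditions below, or handle it with one extra trivial preimage, noting that the forbidden-letter language $A'^*$ is itself a preimage of an element of $U_1 \in \vV$, or more uniformly of $M = \abstar$). The substance is: a word $w$ over $A'$ lies in $(A_1 \cdots A_n)^+$ if and only if (i) $w$ is non-empty, starts with a letter of $A_1$ and ends with a letter of $A_n$, and (ii) for each $j \in \{1, \dots, n\}$, every occurrence of a letter of $A_j$ is immediately followed by a letter of $A_{j+1}$ (indices mod $n$, i.e.\ $A_{n+1} := A_1$), \emph{with the sole exception} that the very last letter of $w$, which lies in $A_n$, need not be followed by anything. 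Equivalently, reading cyclically, the letters alternate through the blocks $A_1, A_2, \dots, A_n, A_1, \dots$ in the correct cyclic order and the word begins at phase $1$ and ends at phase $n$.

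The key step is to see that each single "phase-$j$" condition is exactly a language of the third type in Lemma~\ref{lem:lang1}. Fix $j$ and set $B = A_j$, $C = A_{j+1}$ (indices mod $n$ as above), $E = \emptyset$, and $F = A \setminus (A_j \union A_{j+1})$ in the template $(E^* B (E \union F)^* C E^* \union E^* D E^*)^+$ — but we also want $D = \emptyset$ here, which kills the second disjunct, so the language becomes $(A_j \cdot F^* \cdot A_{j+1})^+$ with $F = A \setminus (A_j \union A_{j+1})$. A word lies in this language iff it can be cut into blocks, each of which starts with an $A_j$-letter, ends with an $A_{j+1}$-letter, and has no $A_j$- or $A_{j+1}$-letter strictly inside; equivalently, iff (scanning left to right) the first letter is in $A_j$, the last is in $A_{j+1}$, and $A_j$- and $A_{j+1}$-letters strictly alternate along the word starting with $A_j$. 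Intersecting these $n$ conditions over all $j$ forces the global cyclic alternation $A_1 A_2 \cdots A_n A_1 \cdots$, and the boundary requirements ("first letter in $A_j$, last in $A_{j+1}$") together pin down that $w$ starts in $A_1$ and ends in $A_n$: the $j=1$ condition forces the first letter into $A_1$, and the $j=n$ condition forces the last letter into $A_1$... — here one has to be slightly careful, so I would instead take $j$ ranging appropriately so that one of the $n$ chosen languages is $(A_1 \cdots A_n)^*$-friendly at the seam; concretely, for $j \in \{1, \dots, n-1\}$ use $B=A_j$, $C=A_{j+1}$, $E = \emptyset$, $F = A\setminus(A_j\union A_{j+1})$ giving $(A_j F^* A_{j+1})^+$, and for the "wrap" index handle $A_n$-then-$A_1$ together with the begin/end constraints using the first two templates $E^* B (D\union E)^*$ and $(D\union E)^* C E^*$ of Lemma~\ref{lem:lang1} with $E=\emptyset$ to say "$w\in A_1 A^*$" and "$w \in A^* A_n$". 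That is $n+1$ languages, one too many, so the final bookkeeping step is to merge two of them.

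The main obstacle is exactly that merging / counting: getting the decomposition to use \emph{exactly} $n$ preimages rather than $n+1$ or $n+2$, and making sure the $\abstar$-morphisms genuinely realize each piece. The cleanest fix I anticipate is to observe that the $n-1$ alternation conditions for $j = 1, \dots, n-1$, \emph{once we also know} the word is over $A'$ and non-empty, already imply that the word is a concatenation of segments each matching $A_1 A_2 \cdots A_n$ possibly truncated; then a single additional preimage — the language $A_1 A^* \intersect A^* A_n$ expressed in one monoid of size $6$, which is doable since $A_1 A^*$ is the preimage of $b$ under $b \mapsto b$ for $b\in A_1$, all else $\mapsto ab$, while the "ends in $A_n$" part can be tracked simultaneously in a $6$-element monoid by a symmetric trick, or by passing to $\abstar$ and using that both constraints are first/last-letter constraints which $\abstar$ can encode jointly — supplies the $n$th language and closes both the start and the end. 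I would write the proof by first stating the set-theoretic equivalence "$w \in (A_1\cdots A_n)^+ \iff$ [$n$ explicit conditions]", then invoking Lemma~\ref{lem:lang1} $n$ times (once per condition, with the empty-set choices of $D, E, F$ recorded explicitly), and finally remarking that since $\abstar \in \vV$ and $\abstar$ has size $6$, each condition is a preimage of the required form, which gives the claimed intersection.
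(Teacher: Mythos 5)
Your overall strategy --- writing $(A_1 \cdots A_n)^+$ as an intersection of $n$ ``local alternation'' languages, each an instance of the third template of Lemma~\ref{lem:lang1} over $\abstar$ --- is the same as the paper's, but your concrete choice of those languages is wrong, and the problem is not only at the wrap-around seam you worry about. You place the remaining letters in the $F$-slot of the template, obtaining $(A_j F_j^* A_{j+1})^+$ with $F_j = A \setminus (A_j \union A_{j+1})$. In that language every $F_j$-letter must occur \emph{strictly inside} a block that begins with an $A_j$-letter and ends with an $A_{j+1}$-letter; in particular the whole word must begin with an $A_j$-letter and end with an $A_{j+1}$-letter, and no $F_j$-letter may occur before the first $A_j$-letter, after the last $A_{j+1}$-letter, or between an $A_{j+1}$-letter and the next $A_j$-letter. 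Already for $n = 3$ and $j = 1$, the word $a_1 a_2 a_3$ with $a_i \in A_i$ is not in $(A_1 F_1^* A_2)^+$: the trailing $a_3 \in F_1$ cannot be absorbed into any block. So for every $n \ge 3$ your languages fail even to contain $(A_1 \cdots A_n)^+$, and the claimed decomposition collapses.

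The repair is to use the $D$-slot instead of the $F$-slot for the intermediate conditions: the paper takes $L_j = (A_j A_{j+1} \union D_j)^+$ with $D_j = (A_1 \union \cdots \union A_n) \setminus (A_j \union A_{j+1})$ for $1 \le j \le n-1$, so the irrelevant letters form singleton blocks and may appear anywhere, and $L_j$ expresses exactly ``every $A_j$-letter is immediately followed by an $A_{j+1}$-letter and every $A_{j+1}$-letter is immediately preceded by one''. The seam and the begin/end constraints are then covered by a single $n$-th language $L_n = (A_1 D_n^* A_n)^+$ (here the $F$-slot \emph{is} the right one), which makes your $n+1$-then-merge manoeuvre unnecessary; that manoeuvre is in any case left unfinished in your write-up, and the proposed merged condition $A_1 A^* \intersect A^* A_n$ is weaker than what is needed at the seam. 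Finally, taking complements inside $A_1 \union \cdots \union A_n$ rather than inside $A$, together with the fact that letters outside all of $B, C, D, E, F$ are sent to the zero of $\abstar$ in Lemma~\ref{lem:lang1}, already excludes the letters of $A \setminus (A_1 \union \cdots \union A_n)$ without spending an extra preimage on them.
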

\begin{proof}
  Let $B = A_1 \union \cdots \union A_n$. 
  For $1 \le i \le n-1$, we define the alphabet $D_i = B \setminus (A_i \union
  A_{i+1})$ and the language $L_i = (A_i A_{i+1} \union D_i)^+$. We also let
  $L_n = (A_1 D_n^* A_n)^+$ with $D_n = B \setminus (A_1 \cup A_n)$.
  By construction, we have $L_1 \intersect \cdots
  \intersect L_{n} = (A_1 \cdots A_n)^+$ and by Lemma~\ref{lem:lang1}, each
  of the languages $L_i$ is recognized by a monoid of size $6$.
\end{proof}

We are now able to state the second main theorem of this section.

\begin{theorem}
  Let $\vV$ be a variety of finite monoids such that $\vV \not\subseteq
  \vDS$. Then, the decision problem $\MonInt_1(\vV)$ is $\PSPACE$-complete.
  \label{thm:pspace-complete}
\end{theorem}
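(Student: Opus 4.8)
Membership in $\PSPACE$ follows from Theorem~\ref{thm:pspace} together with the fact that $\MonInt_1(\vV)$ is a restriction of $\MonInt(\vMon)$. For hardness, the plan is to reduce the corridor tiling problem, which is $\PSPACE$-complete, to $\MonInt_1(\vV)$. Given a tiling system $\T = (\lbl, T, n, f, b)$, a candidate $m$-tiling is naturally encoded as a word over an alphabet $A$ built from $T$ together with position markers (at least the column index $\os{1,\dots,n}$, and possibly row-parity or separator symbols), read row by row. The key point is that $m$ is not bounded, so unlike the square tiling reduction of Theorem~\ref{thm:np-hard} we cannot hard-code the shape; instead the constraints must be expressed by languages that are invariant under adding rows, which is exactly what the ``$+$''-closed languages of Lemma~\ref{lem:lang1} and Lemma~\ref{lem:lang2} provide.

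The main steps are as follows. First, fix the alphabet: a natural choice is $A = T \times \os{1,\dots,n}$, with the intended encoding of a tiling $\tau$ being the word $\prod_{i} \prod_{j} (\tau(i,j), j)$. Second, use Lemma~\ref{lem:lang2} to force the ``grid shape'': the set of well-formed words is $(A_1 \cdots A_n)^+$ where $A_j = T \times \os{j}$, and this is an intersection of $n$ preimages of size-$6$ monoids, each accepting set a singleton. Third, enforce the local tiling constraints. The horizontal adjacency constraint (east label of column $j$ equals west label of column $j{+}1$, within every row) and the ``forbidden pair'' constraints are of the form ``between consecutive occurrences of certain blocks, some letter class does not appear'', which is precisely the shape $(E^* B (E\union F)^* C E^* \union E^* D E^*)^+$ handled by Lemma~\ref{lem:lang1}: here $B,C$ pick out the two halves of an adjacent column pair carrying matching labels, $D$ absorbs the irrelevant positions, and $F$ absorbs the letters allowed strictly between them. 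The vertical adjacency constraint (south label at $(i,j)$ equals north label at $(i{+}1,j)$) has the same form, now with the ``between'' region being one full row. Fourth, enforce the boundary conditions: that the first row equals $f$ is captured by finitely many languages of the form $E^* B (D\union E)^*$ (the relevant letter occurs, and only after it may row-terminating letters appear), and that $b$ occurs as a row is captured by a language of the form $(D\union E)^* C E^*$ combined with the shape constraint; alternatively one checks that $b$ appears somewhere as a complete row and truncates — care is needed so that $b$ being an \emph{intermediate} row still yields a valid (shorter) tiling, which it does by the row-by-row nature of the constraints. Each such language is, by Lemma~\ref{lem:lang1}, a preimage of a singleton under a morphism to a monoid of size $6$ in $\vV$; the number of morphisms is polynomial in $\abs{\T}$ and $n$, and each morphism and accepting singleton is computable in logarithmic space, so the reduction is a valid log-space (hence polynomial-time) reduction.

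Correctness then amounts to checking the two directions: if an $m$-tiling exists, its encoding lies in the intersection; conversely, any word in the intersection has the form $(A_1\cdots A_n)^+$, so it reads off as an $m\times n$ array of tiles for some $m$, and the remaining languages guarantee exactly conditions (1)--(4) of an $m$-tiling. The main obstacle I expect is the boundary handling, in particular expressing ``the first row is exactly $f$'' and ``the word ends at a row equal to $b$'' using only the three language templates of Lemma~\ref{lem:lang1} and the grid template of Lemma~\ref{lem:lang2}, without inadvertently allowing a word that decodes to an array whose \emph{last} row is not $b$ (e.g.\ one should ensure $b$ is forced to be the terminal row, or argue that a word whose first $b$-row is internal can be truncated to a genuine tiling). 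A secondary subtlety is that the adjacency constraints must be stated once per column index (resp.\ row-transition) and per label, so one must verify the total count stays polynomial; since $\abs{\lbl}$, $n$, and $\abs{T}$ are all part of the input size this is routine, but it is worth stating carefully. The detailed verification of these encodings is the kind of bookkeeping that belongs in the proof body (or the appendix), following the same style as the square-tiling reduction of Theorem~\ref{thm:np-hard}.
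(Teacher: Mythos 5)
Your overall strategy coincides with the paper's: membership via Theorem~\ref{thm:pspace}, hardness by reducing the corridor tiling problem, the grid shape $(C_1\cdots C_n)^+$ via Lemma~\ref{lem:lang2}, horizontal and vertical adjacency via the third template of Lemma~\ref{lem:lang1}, and the boundary rows via the first two templates. However, the obstacle you flag at the end --- expressing ``the first row is exactly $f$'' and ``the word terminates at a row equal to $b$'' --- is a genuine gap in the plan as you state it, and resolving it requires changing the alphabet, not just doing more bookkeeping. With your choice $A = T\times\os{1,\dots,n}$, the template $E^*B(D\union E)^*$ of Lemma~\ref{lem:lang1} has $B$, $D$, $E$ pairwise \emph{disjoint}, so the $B$-letter occurs exactly once in any word of that language. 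Encoding ``the first column-$j$ letter is $t_j$'' as $D_j^*\os{(t_j,j)}(A\setminus\os{(t_j,j)})^*$ therefore forbids the tile $t_j$ from ever recurring in column $j$, which destroys completeness of the reduction (a valid tiling may reuse $t_j$ in that column). The paper's fix is to take $A = T\times\os{0,1,2}\times\os{1,\dots,n}$, where the middle component marks first, intermediate and bottom rows. Then $F_j=\os{(t_j,0,j)}$ is disjoint from the set $\overline F_j$ of column-$j$ letters marked $>0$, and $D_j^*F_j(\overline F_j\union D_j)^*$ pins down the first row while still allowing $t_j$ to recur under a different marker; symmetrically, $(\overline B_j\union D_j)^*B_jD_j^*$ with $B_j=\os{(u_j,2,j)}$ forces the $2$-marked row to be the terminal one, which also disposes of your worry about $b$ occurring as an intermediate row.

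A second, smaller omission: the third template of Lemma~\ref{lem:lang1} also requires its sets to be pairwise disjoint, so the horizontal constraint $(E_\mu W_\mu\union X_\mu)^+$ needs $E_\mu\intersect W_\mu=\emptyset$, i.e., no tile with $\lambda_w(t)=\lambda_e(t)$ (and likewise for south/north). The paper first normalizes the tiling system by duplicating labels to guarantee this; your plan should do the same. Your remarks on the polynomial number of morphisms, the log-space computability, and the deferral of the two-directional correctness check match what the paper does (it relegates the analogous verification to the appendix for Theorem~\ref{thm:np-hard}), so once the alphabet is enriched and the normalization added, the argument goes through as in the paper.
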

\begin{proof}
  Let $\T = (\lbl, T, n, f, b)$ be a tiling system.
  The objective is to construct a language $L$ which is non-empty if and only
  if there exists a valid $m$-tiling of $\T$ for some $m \in \N$.

  We may assume without loss of generality that $\lambda_w(t) \ne \lambda_e(t)$
  and $\lambda_s(t) \ne \lambda_n(t)$ for all tiles $t \in T$. If, for example,
  $\lambda_w(t) = \mu = \lambda_e(t)$ for a tile $t \in T$, we create a copy
  $\mu'$ of the label $\mu$ and replace every tile with $\lambda_w(t) = \mu$ by
  two copies. In one of the copies, we replace the west label with $\mu'$. We
  repeat this for all other directions and finally remove all tiles with
  $\lambda_w(t) = \lambda_e(t) \in \os{\mu, \mu'}$.

  We define an alphabet $A = T \times \os{0, 1, 2} \times \os{1, \dots, n}$.
  Intuitively, the letters of $A$ correspond to positions in a tiling. The
  first component describes the tile itself, the second component specifies
  whether the tile is in the first row, some intermediate row or in the bottom
  row and the third component specifies the column.
  For each $j \in \os{1, \dots, n}$ and $\mu \in \lbl$, let $C_j = T \times
  \os{0, 1, 2} \times \os{j}$ and $D_j = A \setminus C_j$ and
  \begin{align*}
    W_{\mu} & = \set{(t, i, j) \in A}{\lambda_w(t) = \mu, j > 1}, &
    N_{j, \mu} & = \set{(t, i, j) \in A}{\lambda_n(t) = \mu, i > 0}, \\
    E_{\mu} & = \set{(t, i, j) \in A}{\lambda_e(t) = \mu, j < n}, &
    S_{j, \mu} & = \set{(t, i, j) \in A}{\lambda_s(t) = \mu, i < 2}, \\
    X_{\mu} & = A \setminus (W_{\mu} \union E_{\mu}), &
    Y_{j, \mu} & = C_j \setminus (N_{j, \mu} \union S_{j, \mu}).
  \end{align*}
  Note that by our initial assumption, $W_\mu \intersect E_\mu = \emptyset$ and
  $N_{j,\mu} \intersect S_{j,\mu} = \emptyset$ for each $\mu \in \lbl$ and for
  $1 \le j \le n$.
  Let $F_j = \os{(t_j, 0, j)}$ and $B_j = \os{(u_j, 2, j)}$ where $t_j$ and
  $u_j$ are the tiles uniquely determined by $f = t_1 \cdots t_n$ and $b =
  u_1 \cdots u_n$.
  Let $\overline F_j = \set{(t, i, j) \in A}{i > 0}$ and
  $\overline B_j = \set{(t, i, j) \in A}{i < 2}$.
  We define
  \begin{align*}
    K =
      \left( \bigcap_{1 \le j \le n} D_j^* F_j (\overline F_j \union D_j)^* \right)
    & \intersect \left( \bigcap_{1 \le j \le n} (\overline B_j \union D_j)^* B_j D_j^* \right)
      \intersect \left( \bigcap_{\mu \in \lbl} (E_\mu W_\mu \union X_\mu)^+ \right) \\
    & \intersect \left( \bigcap_{\substack{\mu \in \lbl, \\ 1 \le j \le n}} (D_j^* S_{j, \mu} D_j^* N_{j, \mu} D_j^* \union D_j^* Y_{j, \mu} D_j^*)^+ \right).
  \end{align*}
  and $L = (C_1 \cdots C_n)^+ \intersect K$.
  By Lemma~\ref{lem:lang1} and Lemma~\ref{lem:lang2}, the language $L$ can be
  represented by a $\MonInt(\vV)$ instance with polynomially many morphisms to
  monoids of size $6$ from $\vV$ and with singleton accepting sets.
\end{proof}

\section{A Small Model Property for $\vDO$}
\label{sec:small-model}

The objective of this section is to prove the following result which states
that, within a non-empty intersection of languages recognized by monoids from $\vDO$,
there always exists a word with a small SLP representation.

\begin{theorem}
  For each $i \in \os{1, \dots, k}$, let $M_i \in \vDO$ and let $h_i \colon A^*
  \to M_i$ be a morphism. Let $w \in A^*$.
  Then there exists an SLP $S$ of size at most $p(N)$ with $h_i(S) = h_i(w)$
  for all $i \in \os{1, \dots, k}$ where $p \colon \R \to \R$ is some
  polynomial and $N = \abs{M_1} + \cdots + \abs{M_k}$.
  \label{thm:small-model}
\end{theorem}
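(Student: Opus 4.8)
The plan is to prove this by induction on the structure of the word $w$, maintaining a compressed representation whose size depends only on $N$. The key intuition: in $\vDO$, the image $h_i(w)$ is determined by combinatorial information about $w$ that is "bounded" in a precise sense — essentially, $\vDO$ monoids cannot count repetitions beyond a bounded threshold, and the condition $efe = e$ for $\Jeq$-equivalent idempotents means that once we see a block of letters whose image is idempotent, repeating that block does not change anything relevant. So the strategy is to repeatedly locate long repetitions in $w$, replace them by SLP-compressed powers using Lemma~\ref{lem:slp}, and argue the result shrinks to polynomial size.

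\textbf{Step 1: Reduce to a single monoid.} Replace $M_1, \dots, M_k$ by their direct product $M = M_1 \times \cdots \times M_k$ and $h = (h_1, \dots, h_k)$; then $\vDO$ is closed under products so $M \in \vDO$, and it suffices to find an SLP $S$ with $h(S) = h(w)$ of size polynomial in $\abs M \le \prod \abs{M_i}$. But this is too crude — $\abs M$ is exponential in $N$. So instead I keep the factors separate and argue that the relevant "combinatorial type" of a word that determines all $h_i(w)$ simultaneously has polynomial (in $N$) description length.

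\textbf{Step 2: Identify the right combinatorial invariant.} Here I would invoke the structure theory of $\vDO$: a word's image in a $\vDO$-monoid is governed by its content $\alp(w)$ together with, recursively, the "first occurrence" and "last occurrence" structure — this is the well-known characterization of $\vDO$-recognizable languages via alternating left/right deterministic factorizations (the $\vDO = \vMe{\vJ}$ type results, or equivalently the "rankers"/turtle-program characterization). Concretely, for each alphabet $B \subseteq A$, writing $w = u\, a\, v$ where $a$ is the first letter (reading left) not in some sub-alphabet, one gets a canonical factorization tree of depth at most $\abs A$ — but I need the branching controlled by $\abs{M_i}$, not $\abs A$, since $\abs A$ is part of the input and could be large. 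The fix: only letters that are "distinguishable" by the $h_i$ matter; group letters into at most $\prod \abs{M_i}$... still exponential. Better: argue letter-by-letter that at each morphism level the number of "essentially different" prefixes is at most $\abs{M_i}$, so a product-of-$\vJ$-chain argument gives a factorization into $O(\text{poly}(N))$ pieces each of bounded complexity.

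\textbf{Step 3: Build the SLP and bound its size.} Using the factorization from Step 2, I build $S$ bottom-up: the leaves are single letters or short words, and each internal node is either a concatenation of two children or, crucially, a power $\alpha^e$ of a child where $\alpha$ has idempotent image in all $M_i$ on the relevant $\Jeq$-class — here Lemma~\ref{lem:slp} pays for the power with only $4\log e$ extra symbols, and since $e \le \abs w$, $\log e$ is at most the input size but we want polynomial in $N$ only; so I must instead replace each such power $\alpha^e$ by $\alpha^{\omega}$ where $\omega = \lcm$ of the $\omega_{M_i}$, which has $\log$ polynomial in $N$. The equation $h_i(\alpha^e) = h_i(\alpha^\omega)$ whenever $h_i(\alpha)$ is idempotent (or more generally lies in the relevant $\Jeq$-class, using $\vDO$'s defining identity $efe = e$) is the content of why $\vDO$ is needed and where Proposition~\ref{prop:ds-char} / the $\vDO$ definition enters.

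\textbf{The main obstacle} I expect is Step 2: cleanly extracting, from an arbitrary word $w$, a factorization into polynomially-many (in $N$, \emph{not} in $\abs A$ or $\abs w$) blocks such that (a) the blocks are either short or idempotent-typed powers, and (b) replacing each power's exponent by the bounded value $\omega$ preserves every $h_i$-image. The delicacy is that the natural recursion is on sub-alphabets (depth $\le \abs A$), which is the wrong parameter; I would need to collapse the recursion using the monoid sizes — e.g. show that along any root-to-leaf path in the factorization tree, the $\Jeq$-class in $M_i$ of the current factor strictly descends whenever "real" progress is made, bounding the path length by $\abs{M_i}$, and that branches with no progress can be merged. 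Making this simultaneous over all $i$ while keeping the total block count polynomial is the technical heart; I would model it on the known polynomial-size "canonical form" arguments for $\vDA$/$\vDO$ words but track the compression explicitly via Lemma~\ref{lem:slp} and substitution of $\omega$ for large exponents.
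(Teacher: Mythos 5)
There is a genuine gap, and it sits exactly where you suspect --- but it is deeper than a missing technical step in Step~2: your core compression primitive (locate long repetitions, replace $\alpha^e$ by an SLP for $\alpha^\omega$) cannot prove the theorem, because $\vG \subseteq \vDO$, so your opening claim that $\vDO$ ``cannot count repetitions beyond a bounded threshold'' is false. For a morphism onto a fixed finite group, an arbitrary word $w$ need not contain any repetitions at all and yet can be arbitrarily long; the short equivalent SLP is not obtained by compressing factors of $w$ but is a genuinely different word that happens to have the same image in every $G_i$ simultaneously. No amount of replacing exponents by $\omega$ or typing blocks by idempotents produces it. The paper handles this with a separate result (Theorem~\ref{thm:groups}): a Schreier--Sims-style sifting procedure that maintains a table of polynomial-size SLP representatives, one per element of each group, initialized so that sifting any further word only ever walks through already-filled table entries. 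This group case is entirely absent from your proposal, and it is the load-bearing half of the proof.

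The second missing piece is the reduction of the general $\vDO$ case to that group case. The paper does not use the ranker/alternating-deterministic-factorization structure theory you invoke (your worry that its recursion depth is governed by $\abs{A}$ rather than $N$ is precisely why that route is not taken). Instead it factorizes $w = a_1 u_1 a_2 \cdots u_{\ell-1} a_\ell$ into at most $N^2$ pieces so that each $u_j$ is an \emph{isolated occurrence} in $w = p_j u_j q_j$: there exist $v_i, w_i$ with $\alp(v_i) = \alp(w_i) \supseteq \alp(u_j)$, $h_i(p_j v_i) = h_i(p_j)$ and $h_i(w_i q_j) = h_i(q_j)$. The $N^2$ bound comes from a strictly increasing chain of sets $C_r \subseteq \bigunion_{i} M_i \times M_i$ of (prefix image, suffix image) pairs, which is exactly how the count is made to depend on $N$ and not on $\abs{A}$ or $\abs{w}$. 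Then, for each isolated $u_j$ with $B = \alp(u_j)$, one defines the congruence $m \equiv_i n$ iff $h_i(p)\ms xmy\ms h_i(q) = h_i(p)\ms xny\ms h_i(q)$ for all $x, y \in h_i(B^*)$, and Lemma~\ref{lem:groupred} shows $h_i(B^*)/{\equiv_i}$ is a \emph{group} --- this is the one place where the $\vDO$ identities (via Lemmas~\ref{lem:ds} and~\ref{lem:do}) are used. The group theorem applied to these quotients finishes the proof. Your proposal contains neither the isolating factorization nor the group quotient, and Step~2 as written acknowledges it does not know how to close this; as it stands the argument does not go through.
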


Before diving into the proof of this result, we note that the theorem
immediately yields the following corollary:

\begin{corollary}
  $\MonInt(\vDO)$ is $\NP$-complete.
\end{corollary}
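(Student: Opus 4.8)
The plan is to prove the two directions of $\NP$-completeness separately. For $\NP$-hardness, it suffices to observe that $\vDO$ is a non-trivial variety: it contains, for instance, the two-element monoid $U_1 = \os{0, 1}$ with integer multiplication (both of its elements are idempotent, and its two idempotents are not $\Jeq$, so the defining condition of $\vDO$ holds vacuously). Hence Theorem~\ref{thm:np-hard} applies and yields that $\MonInt(\vDO)$ is $\NP$-hard.

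For membership in $\NP$, the algorithm guesses a certificate and verifies it in polynomial time. By Theorem~\ref{thm:small-model}, if the intersection $h_1^{-1}(P_1) \intersect \cdots \intersect h_k^{-1}(P_k)$ is non-empty, say it contains a word $w$, then there is an SLP $S$ of size at most $p(N)$, where $N = \abs{M_1} + \cdots + \abs{M_k}$ and $p$ is a fixed polynomial, such that $h_i(S) = h_i(w)$ for every $i$; in particular $h_i(S) \in P_i$ for all $i$. Conversely, any SLP $S$ with $h_i(S) \in P_i$ for all $i$ witnesses that $\val(S)$ lies in the intersection. So it suffices to non-deterministically guess an SLP $S$ of size at most $p(N)$ (polynomial in the size of the input) together with the linear order on its variables certifying acyclicity, and then verify that $h_i(\val(S)) \in P_i$ for each $i$.

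The verification runs in polynomial time by a standard bottom-up evaluation. Process the variables of $S$ in the guessed linear order; for each variable $X$ with right-hand side $P(X) = \gamma_1 \cdots \gamma_\ell$, where each $\gamma_j$ is a variable or a letter, compute $h_i(\val(X)) = h_i(\val(\gamma_1)) \cdots h_i(\val(\gamma_\ell))$ in $M_i$, reusing the already-computed values for the variables on the right-hand side and using $h_i(a)$ for letters $a$. Each monoid multiplication is a table lookup, so the whole evaluation takes time polynomial in $\abs{S}$ and $N$; repeating this for all $k$ morphisms and then checking membership of the resulting $k$ elements in the respective accepting sets $P_i$ completes the verification.

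The only point requiring a little care is the guessing step: one must ensure that the guessed object really is a valid SLP, i.e.\ that it produces a finite word. This is handled by guessing, alongside the rules, the linear order $<$ on the variables and checking that $Y < X$ whenever $Y$ occurs on the right-hand side of $X$ — exactly the acyclicity condition from the definition of an SLP — which is verifiable in polynomial time. In particular there is no need to compute or bound $\abs{\val(S)}$, which may be exponential, since the morphisms are evaluated directly on the compressed representation.
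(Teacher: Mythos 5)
Your proposal is correct and follows essentially the same route as the paper: $\NP$-hardness via Theorem~\ref{thm:np-hard} (since $\vDO$ is non-trivial), and membership in $\NP$ by guessing a polynomial-size SLP whose existence is guaranteed by Theorem~\ref{thm:small-model} and evaluating the morphisms bottom-up on the compressed representation. The extra care you take about guessing and checking the linear order on variables is a reasonable elaboration of a detail the paper leaves implicit.
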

\begin{proof}
  In view of Theorem~\ref{thm:np-hard}, it suffices to describe an
  $\NP$-algorithm.
  The algorithm first non-deterministically guesses an SLP of polynomial size
  producing a word in the intersection of the given languages. It remains to
  check that the word represented in the SLP is indeed contained in each of the
  languages. To this end, we compute the image of the word represented by the
  SLP under each of the morphisms. Each such computation can be performed in
  time linear in the size of the SLP by computing the image of a variable $X$
  as soon as the images of all variables appearing on the right-hand side of
  $X$ are computed already, starting with minimal variables.
\end{proof}

\subsection{The Group Case}

We first take care of a special case, namely that each of the monoids is a
group.
In this case, one can use a variant of the \emph{Schreier-Sims
algorithm}~\cite{Sims1967,FurstHopcroftLuks80} to obtain a compressed
representative. To keep the paper self-contained, we give the full algorithm
alongside with a correctness proof.

Our setting is as follows: the input are groups $G_1$, \ldots, $G_k$
which are, without loss of generality, assumed to be pairwise disjoint, and
morphisms $h_i \colon A^* \to G_i$ with $i \in \os{1, \dots, k}$.
We let $G = G_1 \union \cdots \union G_k$ and $N = \abs{G}$. Note
that $G$ is considered as a set; it does not form a group unless $k = 1$.
However, for each $g \in G$, we interpret powers $g^i$ in the corresponding
group $G_i$ with $g \in G_i$.
We let $\omega = N!$ so that, for each $g \in G$, the element $g^\omega$ is the
identity.\footnote{One could also choose $\omega = \lcm\os{\abs{G_1}, \dots, \abs{G_k}}$ but for the analysis, it does not matter, since
$N!$ is sufficiently small.}

\begin{algorithm}[h!]
  \caption{The \textsc{sift} procedure}
  \begin{algorithmic}
    \Procedure{sift}{$\alpha$}
    \State $R_0 \gets \varepsilon$
    \For{$i \in \os{1, \dots, k}$}
      \State $S_i \gets R_{i-1}^{\omega-1} \alpha$
      \LineIf{$T[h_i(S_i)] = \varepsilon$}{$T[h_i(S_i)] \gets S_i$}
      \State $R_i \gets R_{i-1} T[h_i(S_i)]$
    \EndFor
    \State \Return $R_k$
    \EndProcedure
  \end{algorithmic}
\end{algorithm}

The algorithm maintains a table $T \colon G \to (A \union V)^*$ as an internal
data structure, where the set of variables $V$ is extended as needed and the
table entries $T[g]$ can be considered variables themselves.
The $\Sift$ procedure expects a parameter $\alpha \in (V \union A)^*$ and tries
to find a short representation of $\val(\alpha)$, using only entries from the
table unless it comes across an empty table entry, in which case it uses
$\alpha$ to fill the missing table entry itself.
When a table entry is assigned a word with a factor of the form $X^{\omega-1}$,
this factor is stored in a compressed form by using the technique from
Lemma~\ref{lem:slp} and adding new variables as needed. Thus, a factor
$X^{\omega-1}$ only requires $4 \log(\omega-1) \le 4 \log(N!) \le 4 N \log(N)$
additional space.

\begin{algorithm}[h!]
  \caption{Initialization of the compression algorithm for groups}
  \begin{algorithmic}
    \Procedure{init}{}
      \LineForAll{$g \in G$}{$T[g] \gets \varepsilon$}
      \State $c \gets 0$
      \Repeat
        \State $c_p \gets c$
        \ForAll{$g_1 \in G_1, \dots g_k \in G_k, a \in A$}
          \State $\Call{sift}{T[g_1] \cdots T[g_k] a}$
        \EndFor
        \State $c \gets \abs{\set{g \in G}{T[g] \ne \varepsilon}}$
      \Until{$c = c_p$}
    \EndProcedure
  \end{algorithmic}
  \label{alg:main}
\end{algorithm}

Before the $\Sift$ procedure is used for compression, the table needs to be
initialized.
To this end, the $\Init$ routine fills the table with short representatives
such that future $\Sift$ invocations never run into empty table entries again.
Let us first prove several invariants of the $\Sift$ procedure.

\begin{lemma}
  For each $i \in \os{1, \dots, k}$ and $g \in G_i$, we have $T[g] =
  \varepsilon$ or $h_i(T[g]) = g$.
  \label{lem:table}
\end{lemma}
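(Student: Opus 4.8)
The plan is to prove the statement as an invariant maintained throughout the entire execution of the compression algorithm — both the $\Init$ routine and every call to $\Sift$ — by induction on the sequence of writes to the table $T$. First I would observe that $T$ is modified in exactly two places: the line ``for all $g \in G$: $T[g] \gets \varepsilon$'' at the beginning of $\Init$, and the assignment ``if $T[h_i(S_i)] = \varepsilon$ then $T[h_i(S_i)] \gets S_i$'' inside the for-loop of $\Sift$. After the first kind of write every entry equals $\varepsilon$, so the invariant holds trivially; hence for the inductive step it suffices to analyse a single execution of the second kind of write.

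Next I would record the key structural fact that, because of the guard $T[h_i(S_i)] = \varepsilon$, a table entry is assigned a non-empty value at most once, and is never changed afterwards. Consequently, for any $g$ with $T[g] \ne \varepsilon$, the word $\val(T[g])$ produced by the variable $T[g]$ is fixed from the moment of that assignment onward — even though $\Sift$ keeps extending $V$ and filling further entries, and even though the factors of the form $R_{i-1}^{\omega-1}$ occurring in $S_i$ are stored in compressed form, which by Lemma~\ref{lem:slp} leaves the evaluated word unchanged. It therefore suffices to verify the claimed property for the single entry $g := h_i(S_i)$ immediately after the assignment $T[g] \gets S_i$, since no other entry is affected by that write.

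Finally, at the moment of this assignment we have, unfolding the notational convention $h_i(S_i) = h_i(\val(S_i))$, that $g = h_i(\val(S_i))$, and $g$ lies in $G_i$ because $h_i$ maps $A^*$ into $G_i$; since the monoids $G_1, \dots, G_k$ are pairwise disjoint, $i$ is precisely the index with $g \in G_i$. Hence $h_i(T[g]) = h_i(\val(S_i)) = g$, which is the second alternative in the statement, completing the induction. There is no real obstacle in this argument; the only point needing a little care is the remark that storing the power $R_{i-1}^{\omega-1}$ in compressed form via Lemma~\ref{lem:slp} does not alter $\val(S_i)$, so that the key under which the entry is filed genuinely coincides with the $h_i$-image of the stored word.
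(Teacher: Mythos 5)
Your proposal is correct and follows essentially the same route as the paper's proof: both hinge on the observation that a table entry $T[g]$ is written a non-empty value at most once, namely in the round where $g = h_i(S_i)$, so that the stored SLP $S_i$ satisfies $h_i(T[g]) = h_i(S_i) = g$ by construction. Your version merely makes explicit some points the paper leaves implicit (the induction over writes, the disjointness of the $G_i$ pinning down the round index, and the harmlessness of the Lemma~\ref{lem:slp} compression), all of which are sound.
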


\begin{proof}
  Suppose that $T[g] \ne \varepsilon$. Then, in some round of the $\Sift$
  procedure, we have $h_i(S_i) = g$ and $T[h_i(S_i)]$ is assigned the SLP $S_i$
  (and never modified again). Therefore, $h_i(T[g]) = h_i(T[h_i(S_i)]) =
  h_i(S_i) = g$.
\end{proof}

\begin{lemma}
  After round $i$ of the $\Sift$ procedure, we have $h_i(R_i) = h_i(\alpha)$.
  \label{lem:step}
\end{lemma}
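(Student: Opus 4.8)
The plan is a one-line algebraic computation resting on two ingredients already at hand: Lemma~\ref{lem:table}, which guarantees that every non-empty table entry $T[g]$ satisfies $h_i(T[g]) = g$, and the choice $\omega = N!$, which forces $g^{\omega}$ to be the identity in the ambient group for every $g \in G$.

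First I would fix $i \in \os{1, \dots, k}$ and unfold the loop body of $\Sift$ at the end of round~$i$: by construction $R_i = R_{i-1}\, T[h_i(S_i)]$ with $S_i = R_{i-1}^{\omega-1}\,\alpha$. The one bookkeeping point to check is that $T[h_i(S_i)]$ is non-empty by the time the line $R_i \gets R_{i-1}\, T[h_i(S_i)]$ is executed — if it was empty, the preceding conditional has just filled it with $S_i$ — so that Lemma~\ref{lem:table} applies with $g = h_i(S_i) \in G_i$ and yields $h_i(T[h_i(S_i)]) = h_i(S_i)$.

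Next, viewing $h_i$ as extended to $(V \union A)^*$ via $h_i(\beta) = h_i(\val(\beta))$, which is still a morphism of monoids since $\val$ commutes with concatenation, I would compute
\begin{align*}
  h_i(R_i)
  &= h_i(R_{i-1}) \cdot h_i(T[h_i(S_i)])
   = h_i(R_{i-1}) \cdot h_i(S_i) \\
  &= h_i(R_{i-1}) \cdot h_i(R_{i-1})^{\omega-1} \cdot h_i(\alpha)
   = h_i(R_{i-1})^{\omega} \cdot h_i(\alpha),
\end{align*}
and then invoke that $h_i(R_{i-1}) \in G_i$ together with the fact that $\abs{G_i}$ divides $\omega = N!$ (Lagrange) to conclude $h_i(R_{i-1})^{\omega} = 1$, hence $h_i(R_i) = h_i(\alpha)$.

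I do not expect a genuine obstacle here: the statement is not an induction on $i$ but rather $k$ independent one-line identities, and the whole purpose of prepending the power $R_{i-1}^{\omega-1}$ to $\alpha$ in the definition of $S_i$ is precisely to cancel the stray factor $R_{i-1}$ that the update $R_i = R_{i-1}\, T[\cdot]$ introduces. The only things that need care are that the table lookup happens after the conditional fill (so that Lemma~\ref{lem:table} is legitimately applicable) and that $\omega$ was chosen exactly so that $\omega$-th powers are trivial in each group $G_i$.
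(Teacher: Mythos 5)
Your proof is correct and follows essentially the same route as the paper: rewrite $R_i = R_{i-1}\,T[h_i(S_i)]$, use Lemma~\ref{lem:table} to replace the table entry by $S_i$, expand $S_i = R_{i-1}^{\omega-1}\alpha$, and cancel $h_i(R_{i-1})^{\omega}$ because it is the identity in the group $G_i$. The extra bookkeeping you include (that the table entry is non-empty when looked up, and that $\abs{G_i}$ divides $N!$) is implicit in the paper's one-line argument but not a different idea.
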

\begin{proof}
  By the definition of $R_i$, we have $h_i(R_i) = h_i(R_{i-1}
  T[h_i(S_i)])$ which is the same as $h_i(R_{i-1} S_i)$ by
  Lemma~\ref{lem:table}. Plugging in the definition of $S_i$ yields
  $h_i(R_{i-1} R_{i-1}^{\omega-1} \alpha) = h_i(\alpha)$ where the latter
  equality holds since $G_i$ is a group.
\end{proof}

\begin{lemma}
  For $1 \le i < j \le k$ and for all $g \in G_j$, we have $h_i(T[g]) = 1$.
  \label{lem:identity}
\end{lemma}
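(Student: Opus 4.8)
The statement to prove is Lemma~\ref{lem:identity}: for $1 \le i < j \le k$ and every $g \in G_j$, we have $h_i(T[g]) = 1$. The key observation is structural: the table entry $T[g]$ for $g \in G_j$ is only ever filled during round $j$ of some $\Sift$ invocation, at which point it is set to $S_j = R_{j-1}^{\omega-1}\alpha$ for the $\alpha$ passed to that invocation. So everything hinges on understanding the images $h_i(R_{j-1})$ and $h_i(\alpha)$ for the indices $i < j$.

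First I would set up an induction on $i$ (or equivalently argue about all $i < j$ simultaneously) establishing the stronger bookkeeping fact that $h_i(R_{i}) = h_i(\alpha)$ — which is exactly Lemma~\ref{lem:step} already proved — together with the claim that for $i < j$, $h_i(R_{j-1}) = h_i(R_i)$, \ie{}the factors $T[h_{i+1}(S_{i+1})] \cdots T[h_{j-1}(S_{j-1})]$ appended in rounds $i+1, \dots, j-1$ are all mapped to $1$ by $h_i$. But these appended factors are table entries $T[g']$ with $g' \in G_{i+1} \cup \cdots \cup G_{j-1}$, so this is precisely a consequence of the lemma being proved, applied with smaller second index. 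This suggests the clean route is a single induction on the second index $j$: assume the statement holds for all pairs $(i', j')$ with $j' < j$, and prove it for $j$.

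So, fixing $j$ and any $i < j$, suppose $T[g] \ne \varepsilon$ for some $g \in G_j$. Then in some $\Sift$ call with parameter $\alpha$, during round $j$ we had $h_j(S_j) = g$ and set $T[g] \gets S_j = R_{j-1}^{\omega-1}\alpha$; hence $h_i(T[g]) = h_i(R_{j-1})^{\omega-1} \, h_i(\alpha)$. Now I need two facts. (1) $h_i(\alpha) = 1$: the parameters $\alpha$ ever passed to $\Sift$ are (from $\Init$) of the form $T[g_1]\cdots T[g_k]\,a$ with $g_\ell \in G_\ell$; applying the inductive hypothesis — $h_i(T[g_\ell]) = 1$ for $\ell > i$ — and Lemma~\ref{lem:table} (so $h_i(T[g_i]) = g_i$), we get $h_i(\alpha) = g_i \, h_i(a)$... which is \emph{not} $1$ in general. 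This is the point I expect to be the main obstacle: the naive reading fails, so the lemma must really be saying something about $R_{j-1}$ absorbing everything, and I should instead track $h_i(R_{j-1}\,\text{-prefix-of-}S_j)$ rather than splitting off $\alpha$. The right move is to observe that $S_j$ is, after the round, also equal to $T[h_j(S_j)]$, and to use Lemma~\ref{lem:step} at index $i$: since round $i$ precedes round $j$ within the \emph{same} $\Sift$ call, and $R_j = R_{j-1} T[h_j(S_j)] = R_{j-1} T[g]$, we get $h_i(R_j) = h_i(R_{j-1}) h_i(T[g])$. But by Lemma~\ref{lem:step}, $h_i(R_i) = h_i(\alpha)$, and $h_i(R_j) = h_i(R_i) \cdot h_i(\text{appended factors in rounds } i+1,\dots,j)$; the appended factors in rounds $i+1,\dots,j-1$ lie in $T[G_{i+1} \cup \dots \cup G_{j-1}]$ and are killed by $h_i$ via the inductive hypothesis, so $h_i(R_j) = h_i(R_i) \cdot h_i(T[g]) = h_i(\alpha) \cdot h_i(T[g])$.

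It remains to pin down $h_i(R_{j-1})$, since $h_i(T[g]) = h_i(R_{j-1})^{\omega-1} h_i(\alpha)$. By the same telescoping as above, $h_i(R_{j-1}) = h_i(R_i) = h_i(\alpha)$, so $h_i(T[g]) = h_i(\alpha)^{\omega-1} h_i(\alpha) = h_i(\alpha)^{\omega}$, which is the identity of $G_i$ since $\omega = N!$ is a multiple of $|G_i|$. That closes the induction. The only genuinely delicate bookkeeping is making sure the telescoping $h_i(R_{j-1}) = h_i(R_i)$ is legitimately justified by the inductive hypothesis (the intermediate table entries have second index strictly between $i$ and $j$, hence strictly less than $j$), and that the base case $i$ immediately preceding $j$, i.e.\ $j = i+1$, works — there $R_{j-1} = R_i$ literally, with no intermediate factors, so $h_i(R_{j-1}) = h_i(R_i) = h_i(\alpha)$ directly from Lemma~\ref{lem:step}, and the argument goes through unchanged.
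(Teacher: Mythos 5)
Your proposal is correct and follows essentially the same route as the paper's proof: expand $T[g] = S_j = R_{j-1}^{\omega-1}\alpha$, telescope $h_i(R_{j-1}) = h_i(R_i)$ by applying the lemma inductively to the intermediate table entries $T[h_r(S_r)]$ with $i < r < j$, and conclude $h_i(T[g]) = h_i(\alpha)^{\omega} = 1$ via Lemma~\ref{lem:step}. The false start about showing $h_i(\alpha)=1$ is correctly abandoned, and the final argument matches the paper's.
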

\begin{proof}
  Consider the invocation of the $\Sift$ procedure where $T[g]$ is defined.
  In round $j$ of this invocation, the entry $T[g]$ is assigned some SLP $S_j$
  with $h_j(S_j) = g$.

  Therefore, $h_i(T[g]) = h_i(S_j) = h_i(R_{j-1}^{\omega-1} \alpha)$. Expanding
  $R_{j-1}$ yields
  \begin{equation*}
    h_i(R_{j-1}) =
    h_i \big(\left( \prod_{r=1}^{j-1} T[h_r(S_r)] \right) \big) =
    h_i \big(\left( \prod_{r=1}^{i} T[h_r(S_r)] \right) \big) =
    h_i(R_{i})
  \end{equation*}
  where the second equality follows by induction. Therefore, $h_i(T[g]) =
  h_i(R_i^{\omega-1} \alpha)$ which is the same as $h_i(\alpha^{\omega-1}
  \alpha) = 1$ by Lemma~\ref{lem:step}.
\end{proof}

\begin{lemma}
  After round $j$, we have $h_i(R_k) = h_i(\alpha)$ for all $i \in
  \os{1, \dots, j}$.
  \label{lem:sift-return}
\end{lemma}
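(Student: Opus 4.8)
I want to show that after round $j$ of the \Sift{} procedure, $h_i(R_k) = h_i(\alpha)$ holds for every $i \in \os{1, \dots, j}$, i.e.\ once a coordinate has been "processed," the later rounds never disturb its image. The natural approach is induction on the number of remaining rounds, combined with Lemma~\ref{lem:step} (which gives the base case $h_j(R_j) = h_j(\alpha)$ after round $j$) and Lemma~\ref{lem:identity} (which controls what happens to an already-processed coordinate when a later round multiplies on a new table entry). Concretely: fix $j$ and prove by downward induction (or equivalently induction on $r = j, j+1, \dots, k$) that $h_i(R_r) = h_i(\alpha)$ for all $i \le j$. The step from $R_{r-1}$ to $R_r = R_{r-1}\, T[h_r(S_r)]$ only appends the table entry $T[h_r(S_r)]$, which by construction is an entry for an element of $G_r$; since $r > i$, Lemma~\ref{lem:identity} tells us $h_i(T[h_r(S_r)]) = 1$, so $h_i(R_r) = h_i(R_{r-1}) \cdot 1 = h_i(R_{r-1})$, and the inductive hypothesis closes the loop.

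\begin{proof}
  We fix $j \in \os{1, \dots, k}$ and show by induction on $r$, for $j \le r \le k$, that $h_i(R_r) = h_i(\alpha)$ for all $i \in \os{1, \dots, j}$. The claim of the lemma is then the case $r = k$.

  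For the base case $r = j$ and any $i \le j$: by Lemma~\ref{lem:step}, after round $i$ we have $h_i(R_i) = h_i(\alpha)$. Moreover, for every round $r'$ with $i < r' \le j$, the entry $T[h_{r'}(S_{r'})]$ belongs to the table for an element of $G_{r'}$ with $r' > i$, so $h_i(T[h_{r'}(S_{r'})]) = 1$ by Lemma~\ref{lem:identity}, whence $h_i(R_{r'}) = h_i(R_{r'-1} T[h_{r'}(S_{r'})]) = h_i(R_{r'-1})$. Chaining these equalities from $r' = i+1$ to $r' = j$ gives $h_i(R_j) = h_i(R_i) = h_i(\alpha)$.

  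For the induction step, assume $j \le r < k$ and $h_i(R_r) = h_i(\alpha)$ for all $i \le j$. In round $r+1$ we set $R_{r+1} = R_r\, T[h_{r+1}(S_{r+1})]$, and $T[h_{r+1}(S_{r+1})]$ is an entry of the table for an element of $G_{r+1}$. Since $i \le j \le r < r+1$, Lemma~\ref{lem:identity} yields $h_i(T[h_{r+1}(S_{r+1})]) = 1$, so $h_i(R_{r+1}) = h_i(R_r) \cdot h_i(T[h_{r+1}(S_{r+1})]) = h_i(R_r) = h_i(\alpha)$, using the induction hypothesis. This completes the induction and the proof.
\end{proof}

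\noindent
The only subtlety worth flagging is that Lemma~\ref{lem:identity} is stated for table entries $T[g]$ with $g \in G_j$ and $i < j$; here $h_{r+1}(S_{r+1})$ is precisely such an element of $G_{r+1}$ (its image under $h_{r+1}$ by definition lies in $G_{r+1}$), and we apply the lemma with the pair $(i, r+1)$ in place of $(i, j)$, which is legitimate since $i \le j \le r < r+1 \le k$. No further obstacle arises — everything reduces to bookkeeping over the rounds of a single \Sift{} call, using that each $G_i$ is a group only implicitly through the already-established Lemmas~\ref{lem:step} and~\ref{lem:identity}.
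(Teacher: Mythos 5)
Your proof is correct and follows essentially the same route as the paper: the paper expands $R_k$ as the product $\prod_{r=1}^{k} T[h_r(S_r)]$ and truncates it to $R_i$ via Lemma~\ref{lem:identity}, then applies Lemma~\ref{lem:step}, which is exactly what your round-by-round induction does in unrolled form. No gaps.
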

\begin{proof}
  Using the expansion of $R_k$ and Lemma~\ref{lem:identity}, we obtain the
  sequence of equalities
  \begin{equation*}
    h_i(R_k) =
    h_i \big(\left( \prod_{r=1}^{k} T[h_r(S_r)] \right) \big) =
    h_i \big(\left( \prod_{r=1}^{i} T[h_r(S_r)] \right) \big) =
    h_i(R_i).
  \end{equation*}
  The statement now follows immediately from Lemma~\ref{lem:step}.
\end{proof}

\begin{theorem}
  For each $i \in \os{1, \dots, k}$, let $G_i$ be a finite group and let $h_i
  \colon A^* \to G_i$ be a morphism. Let $w \in A^*$.
  Then there exists an SLP $S$ of size at most $p(N)$ with $h_i(S) = h_i(w)$
  for all $i \in \os{1, \dots, k}$ where $p \colon \R \to \R$ is some
  polynomial and $N = \abs{G_1} + \cdots + \abs{G_k}$.
  \label{thm:groups}
\end{theorem}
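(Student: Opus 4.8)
The plan is to run $\Init$ once, then evaluate $\Sift(w)$ and output the SLP it returns, namely $R_k = T[h_1(S_1)] \cdots T[h_k(S_k)]$. That this SLP has the right images is immediate: Lemma~\ref{lem:sift-return} (with $\alpha = w$) gives $h_i(R_k) = h_i(w)$ for all $i \in \os{1,\dots,k}$. So everything reduces to a size estimate, which splits into two parts: bounding the SLP built by $\Init$, and showing that the concluding $\Sift(w)$ call adds almost nothing. I emphasise that we do \emph{not} need $\Init$ to run in polynomial time --- its inner loop ranges over all tuples $(g_1,\dots,g_k) \in G_1 \times \cdots \times G_k$, which is exponential in $N$ --- because the theorem only asserts the existence of a small SLP, and the $\NP$-algorithm of the corollary merely guesses one.

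For the first part: the counter $c$ of non-empty table entries is non-decreasing, bounded by $N = \abs{G}$, and strictly increases in every non-final iteration of the repeat-loop, so $\Init$ terminates. Whenever an entry is filled, it is set to $S_i = R_{i-1}^{\omega-1}\alpha$, where $\alpha = T[g_1]\cdots T[g_k]a$ is a concatenation of at most $k$ already-existing table variables and one letter, and $R_{i-1}$ is likewise a concatenation of at most $k$ such variables; storing this assignment costs $O(k)$ symbols for $\alpha$ and the chain $R_1,\dots,R_{i-1}$, plus $4\log(\omega-1) \le 4N\log N$ symbols for the compressed factor $R_{i-1}^{\omega-1}$ by Lemma~\ref{lem:slp}. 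Since at most $N$ entries are ever filled, the SLP present after $\Init$ has size $O(N^2\log N)$.

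The crux, and the step I expect to be the main obstacle, is that after $\Init$ the call $\Sift(w)$ never meets an empty table entry, so it only appends the chain of variables $R_1,\dots,R_k$ (right-hand sides of length at most $2$) while the monoid arithmetic needed for the table lookups adds no SLP symbols at all. This is a closure statement of Schreier--Sims type. When $\Init$ halts we have $c = c_p$, i.e.\ sifting $T[g_1]\cdots T[g_k]a$ fills nothing for every choice of $g_1,\dots,g_k,a$. Read through the morphisms, the non-empty table entries form a system of coset representatives for a descending chain of subgroups of $\Gamma = \set{(h_1(w),\dots,h_k(w))}{w \in A^*}$, a subgroup of $G_1 \times \cdots \times G_k$, with level $i$ handled by $\set{h_i(T[g])}{g \in G_i,\ T[g]\ne\varepsilon}$; Lemma~\ref{lem:identity} guarantees that these entries have trivial image under $h_1,\dots,h_{i-1}$, while the higher components are stripped off by the ``inverse'' factors $R_{i-1}^{\omega-1}$ (note $g^{\omega-1} = g^{-1}$ in the relevant group), the bookkeeping being controlled by Lemmas~\ref{lem:table} and~\ref{lem:step}. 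The condition $c = c_p$ says exactly that this chain is closed under the generators of $\Gamma$, and then the standard sifting argument shows that $\Sift(\alpha)$ stays within the existing table for \emph{every} word $\alpha$ over $V \cup A$ built from variables present after $\Init$, in particular for $\alpha = w$. Two delicate points remain inside this argument: carrying the correction factors $R_{i-1}^{\omega-1}$ correctly across all $k$ levels, and checking that the letters of $A$ (rather than arbitrary group elements) already saturate the table --- the latter by induction on the iterations of the repeat-loop, using that each $G_i$ is generated by $\set{h_i(a)}{a \in A}$ and that entries outside $h_i(A^*)$ are never queried, so one may assume each $h_i$ surjective.

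Putting the two parts together, $\Sift(w)$ returns an SLP of size $O(N^2\log N) + O(k)$, which is bounded by $p(N)$ for a suitable polynomial $p$, and $h_i(R_k) = h_i(w)$ for every $i$ by Lemma~\ref{lem:sift-return}; this proves the theorem.
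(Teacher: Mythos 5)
Your overall plan is exactly the paper's: run $\Init$, output the $R_k$ returned by $\Sift(w)$, invoke Lemma~\ref{lem:sift-return} for correctness of the images, and bound the size by (table after $\Init$) plus $O(k)$ for the final chain $R_1, \dots, R_k$. Your size accounting ($O(N^2\log N)$, together with the observation that $\Init$ need not run in polynomial time because only the \emph{existence} of a small SLP is claimed) is correct and in fact more explicit than what the paper writes down. The problem is the step you yourself identify as the crux: you assert that ``the standard sifting argument'' shows $\Sift(w)$ never meets an empty table entry, and then concede that two delicate points ``remain inside this argument.'' As written, the closure claim is therefore not proved, and the induction you gesture at (on the iterations of the repeat-loop) is not the right one --- the termination condition $c = c_p$ is used only to know that the \emph{final} table is stable under every call $\Sift(T[g_1] \cdots T[g_k]\ms a)$; it does not by itself say anything about sifting arbitrary words.

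The paper closes this gap with a short minimal-counterexample argument on the \emph{length of the sifted word}, and this is the piece you need to supply. Suppose some $v \in A^*$ of minimal length causes $\Sift(v)$ to fill a new entry, and write $v = v'a$ with $a \in A$. By minimality, $\Sift(v')$ stays inside the table and returns $R_k = T[g_1] \cdots T[g_k]$ with $h_i(T[g_1] \cdots T[g_k]) = h_i(v')$ for all $i$, by Lemma~\ref{lem:sift-return}. The call $\Sift(T[g_1] \cdots T[g_k]\ms a)$ occurs in the last pass of the repeat-loop, and since that pass ended with $c = c_p$ it filled nothing. Now the input $\alpha$ influences the execution of $\Sift$ only through the values $h_i(S_i)$, which are determined recursively by the $h_i(\alpha)$ together with the table entries read so far; since $h_i(T[g_1] \cdots T[g_k]\ms a) = h_i(v)$ for every $i$, the two executions query exactly the same table cells, so $\Sift(v)$ also meets only defined entries --- a contradiction. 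Note that this is also where your worry about ``letters of $A$ versus arbitrary group elements'' dissolves: the generators $a \in A$ are precisely what the inner loop of $\Init$ appends, and arbitrary words are handled by the induction on length rather than by any surjectivity assumption on the $h_i$. With this argument inserted in place of the appeal to a ``standard'' sifting argument, your proof is complete and coincides with the paper's.
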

\begin{proof}
  We claim that the SLP $S$ constructed when calling $\Init$, followed by
  $\Sift$ with parameter $w$ satisfies the properties above.
  By Lemma~\ref{lem:sift-return}, we have $h_i(S) = h_i(w)$ for all $i \in
  \os{1, \dots, k}$.
  Moreover, when the initialization routine returns, the table entries contain
  SLP of polynomially bounded size. We now claim that any subsequent executions
  of the $\Sift$ procedure will not define any new table entries, no matter
  which SLP is passed as a parameter. In particular, running $\Sift(w)$ yields
  an SLP that only uses already existing table entries.

  To prove the claim, assume, for the sake of contradiction, that there exists
  some word $v$ such that some new table entry $T[g]$ is defined during
  $\Sift(v)$. We choose $v$ such that it is a word of minimal length
  satisfying this condition. This means that we can factorize $v = v'a$ with $a
  \in A$ such that all table entries are defined when calling $\Sift(v')$.
  Let $T[g_1] \cdots T[g_k]$ be the return value of $\Sift(v')$.
  Then $\Sift(T[g_1] \cdots T[g_k] a)$ is called during the
  initialization process and because $h_i(T[g_1] \cdots T[g_k] a)
  = h_i(v)$ for all $1 \le i \le k$, the sequence of $S_i$ during the
  execution of $\Sift(T[g_1] \cdots T[g_k] a)$ is the same as in
  $\Sift(v)$ which means that all table entries accessed during $\Sift(v)$ are
  defined.
\end{proof}

\subsection{The General Case}

For the general case, where each of the monoids is in $\vDO$ but not
necessarily a group, we use combinatorial properties of languages recognized by
monoids from $\vDO$ to reduce the problem to the group case.
The following lemmas are an essential ingredient of this reduction.

\begin{lemma}
  Let $h \colon A^* \to M$ be a morphism to a finite monoid $M \in \vDS$.
  Let $u, v \in A^*$ such that $h(v) \in E(M)$ and $\alp(u) \subseteq \alp(v)$.
  Then $h(v) \Jle h(u)$.
  \label{lem:ds}
\end{lemma}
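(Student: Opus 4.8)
The statement asserts that if $h(v)$ is idempotent and every letter of $u$ already occurs in $v$, then the two-sided ideal of $h(v)$ is contained in that of $h(u)$, i.e.\ $h(v) \in M\,h(u)\,M$. I would argue by induction on $|u|$, peeling off one letter of $u$ at a time, and using the submonoid characterization of $\vDS$ from Proposition~\ref{prop:ds-char}(3).

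Here is the plan in more detail. Write $e = h(v) \in E(M)$ and let $S = \set{x \in M}{e \Jle x}$; by Proposition~\ref{prop:ds-char}(3) this is a submonoid of $M$. The key observation is that $e \in S$ trivially (since $e \Jeq e$), and for each single letter $a$ occurring in $v$ we have $e \Jle h(a)$: indeed, writing $v = p a q$ we get $h(v) = h(p)\,h(a)\,h(q) \in M\,h(a)\,M$, so $M\,h(v)\,M \subseteq M\,h(a)\,M$, which is exactly $e \Jle h(a)$. Thus $h(a) \in S$ for every $a \in \alp(v)$. Now if $u = a_1 \cdots a_\ell$ with each $a_i \in \alp(u) \subseteq \alp(v)$, then each $h(a_i) \in S$, and since $S$ is a submonoid, $h(u) = h(a_1)\cdots h(a_\ell) \in S$, which says precisely $e \Jle h(u)$, i.e.\ $h(v) \Jle h(u)$. (The edge case $u = \varepsilon$ gives $h(u) = 1$ and $e \Jle 1$ always holds, consistent with $1 \in S$.)

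The only slightly delicate point is making sure one is allowed to invoke part~(3) of Proposition~\ref{prop:ds-char}: it applies to any idempotent $e \in E(M)$ and any $M \in \vDS$, which is exactly our hypothesis since $\vDO \subseteq \vDS$ is not even needed here — the lemma is stated for $M \in \vDS$ directly. So there is no real obstacle; the proof is essentially a two-line consequence of the characterization, once one notices that the letters of $v$ all lie $\Jle$-above $e$ and that membership in $S$ is closed under product. I would present it in roughly that order: first recall $S$ is a submonoid, then show $h(a) \in S$ for $a \in \alp(v)$, then conclude $h(u) \in S$ by closure.
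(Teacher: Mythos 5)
Your proof is correct and follows essentially the same route as the paper's: observe that $h(v) \Jle h(a)$ for every letter $a \in \alp(v)$, then use the submonoid characterization of $\vDS$ from Proposition~\ref{prop:ds-char}(3) to conclude $h(v) \Jle h(a_1)\cdots h(a_\ell) = h(u)$. The extra detail you supply (writing $v = paq$ to justify $h(v) \Jle h(a)$, and the $u = \varepsilon$ edge case) is left implicit in the paper but changes nothing.
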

\begin{proof}
  Let $u = a_1 \cdots a_\ell$ with $a_i \in A$ for $1 \le i \le \ell$.
  Since $a_i \in \alp(v)$ for each $i \in \os{1, \dots, \ell}$, we have $h(v)
  \Jle h(a_i)$. By Proposition~\ref{prop:ds-char}, the set $\set{x \in M}{h(v)
  \Jle x}$ is a submonoid of $M$ which means that $h(v) \Jle h(a_1) 
  \cdots h(a_\ell) = h(u)$, thereby proving the claim.
\end{proof}

\begin{lemma}
  Let $M \in \vDO$ and let $e, f, g \in E(M)$ with $e \Jeq f \Jle g$. Then $egf
  = ef$.
  \label{lem:do-pre}
\end{lemma}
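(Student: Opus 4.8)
The plan is to work entirely inside the finite monoid $M \in \vDO$ using the defining identity $efe = e$ for $\mathcal{J}$-equivalent idempotents, together with the characterization of $\vDO$ as a subvariety of $\vDS$ so that Proposition~\ref{prop:ds-char} and Lemma~\ref{lem:ds}-style reasoning about ideals are available. First I would observe the hypotheses: $e \Jeq f$, both idempotent, and $g$ is an idempotent with $e \Jle g$ (equivalently $f \Jle g$). The goal $egf = ef$ is an identity relating a ``detour through $g$'' to the direct product $ef$.

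The key step will be to produce suitable idempotents that are $\mathcal{J}$-equivalent to $e$ and to which the $\vDO$ identity can be applied. The natural candidate is $h = (ege)^{\omega}$ (or $(egf\cdots)^{\omega}$ type expressions): since $e \Jle g$ and $e$ is idempotent, Proposition~\ref{prop:ds-char}(2) applied in $\vDS \supseteq \vDO$ gives $(ege)^{\omega} = e$, and similarly $(fgf)^{\omega} = f$. More useful is that $ege$ itself is $\Jeq$-related to $e$ (it lies between $e$'s ideal from above because $e = (ege)^\omega \Jle ege \Jle e$), so $ege \Jeq e$; the same holds for the idempotent $e' = (ege)^{\omega}=e$ trivially, but the point is to get an idempotent $p$ with $e \Jeq p$ and $ep = pe = p$ or a similar absorption, built from $g$. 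Then, writing $egf = e g (fe) (ef) \cdots$ is too naive; instead I would use $e \Jeq f$ to write $f = fef$ (true since $(fef)=f$? — one must check: from $e\Jeq f$ and the $\vDO$ identity $fef = f$, yes, swapping roles of $e,f$ in $efe=e$ gives $fef=f$), and similarly $e = efe$. So $egf = (efe)g(fef) = ef(ege)f$. Now $ege$ is $\mathcal{J}$-equivalent to $e$ but not necessarily idempotent; to fix this, replace $g$ by $g^\omega = g$ and note $(ege)$ can be iterated: using Proposition~\ref{prop:ds-char}(3), $\set{x}{e \Jle x}$ is a submonoid containing $e, g$, hence $ege$ is in it and is $\Jeq e$. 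Then I want $e(ege)f = ef$, i.e., that inserting the ``ideal-equivalent'' factor $ege$ between $e$ and $f$ does nothing.

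The main obstacle — and where the real content lies — is precisely establishing $e \cdot z \cdot f = ef$ for $z = ege$ (or for $z$ replaced by an idempotent $\Jeq e$ obtained from it). I would handle this by working with the idempotent $k = (ge)^\omega$ or $k=(eg)^\omega$: one checks $k \in E(M)$, $k \Jeq e$ (since $e \Jle g, e$ forces $e \Jle (eg)^\omega \Jle e$ in the submonoid of Prop.~\ref{prop:ds-char}(3)), and $ek = k$ type relations. With $k \Jeq e \Jeq f$ all idempotents, the $\vDO$ identity applies to each pair: $ekf$, $fke$, $ekE$, etc. The clean route is: $ege = e g e$; set $e_1 = (ege)^{\omega} = e$. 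Hmm, that collapses. Better: show directly that $eg$ and $gf$ have idempotent powers equal to $e$ and $f$ respectively, then $egf = e\,(eg)^{\omega-1}(eg) \cdot (gf)(gf)^{\omega-1} f$... this needs $eg(gf) = egf$, true, but the exponents need alignment. I expect the slick proof is: since $e\Jeq f \Jle g$, the elements $e, f, g$ all lie in the local submonoid $\set{x}{e\Jle x}$ (Prop.~\ref{prop:ds-char}(3)), so it suffices to prove the identity $egf = ef$ in any monoid of $\vDO$ where additionally $e\Jeq f\Jeq g$ — but we only have $e \Jle g$. To bridge this, replace $g$ by the idempotent $g' = (ege f ge)^{\omega}$ or more simply note $ef \Jeq e \Jeq f$, set $g'' = (gf e)^\omega$, check $g'' \in E(M)$ with $e \Jle g'' $ and $g'' \Jle e$ so $g''\Jeq e$, then apply $\vDO$ twice: $eg''f = ef$ by $efe=e$-style cancellation, and separately relate $g''$ back to $g$ via $eg f = e g'' f$. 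That last equality — replacing the arbitrary element $g$ by the constructed idempotent $g''$ without changing $egf$ — is the crux, and I would prove it by showing $e g f$ is idempotent (or that left-multiplying by $e$ and right by $f$ lands in a group/idempotent-determined situation forced by $\vDO$), so that $egf = (egf)^{\omega}$ and then expand the $\omega$-th power into a product that the $\vDO$ identities collapse to $ef$. I anticipate the proof is short (three or four lines) once the right idempotent is named, but pinning down that idempotent and verifying $e \Jle g$ suffices (rather than needing $g \Jle e$) is the step requiring care.
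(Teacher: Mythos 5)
There is a genuine gap. Your overall strategy---manufacture an idempotent that is $\Jeq$-equivalent to $e$ out of $g$, then collapse it with the $\vDO$ identity $exe = e$---is indeed the right one and matches the spirit of the paper's proof, but you never actually close it. You explicitly flag as ``the crux'' the step of replacing the arbitrary idempotent $g$ (which only satisfies $e \Jle g$, not $g \Jeq e$) by a constructed idempotent without changing the value of $egf$, and none of your candidate fixes is carried out: the claim that $egf$ is idempotent is only known \emph{a posteriori} (it equals $ef$, which is idempotent), so using it would be circular; and the ``alignment of exponents'' for $(eg)^\omega$ and $(gf)^\omega$ is left as an unverified sketch. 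There is also a concrete algebra slip: $(efe)\ms g\ms(fef) = ef\ms(egf)\ms ef$, not $ef(ege)f$ as you write, so that reduction does not produce the factor $ege$ you then try to analyze.

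The missing idea in the paper's proof is that $gf$ is \emph{itself} the idempotent you are looking for. One first proves $fgf = f$: since $e \Jeq f \Jle g$, Proposition~\ref{prop:ds-char} gives $(fgf)^\omega = f$, and the idempotents $(fg)^\omega \Jeq (gf)^\omega$ together with the $\vDO$ identity give $(fg)^\omega = (fg)^\omega (gf)^\omega (fg)^\omega = (fg)^{\omega-1}$; chaining these yields $fgf = (fgf)^\omega gf = (fg)^\omega fgf = (fg)^{\omega-1}fgf = (fg)^\omega f = (fgf)^\omega = f$. Hence $gf \cdot gf = g(fgf) = gf$, so $gf \in E(M)$, and by the submonoid characterization in Proposition~\ref{prop:ds-char} one gets $gf \Jeq e$. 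The conclusion is then the one-line computation $egf = eg(fef) = (e \ms gf \ms e)f = ef$. Without the observation that $gf$ is idempotent (equivalently, without proving $fgf = f$, which genuinely uses both the $\vDS$ and the $\vDO$ identities), your argument does not go through.
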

\begin{proof}
  First note that $(fg)^\omega \Jeq (fgf)^\omega \Jeq (gf)^\omega$. Since $M
  \in \vDS$, we have $(fgf)^\omega = f$ and since $M \in \vDO$, we have
  $(fg)^\omega = (fg)^\omega (gf)^\omega (fg)^\omega = (fg)^{\omega-1}$.
  Together, this yields, $fgf = (fgf)^\omega gf = (fg)^\omega fgf =
  (fg)^{\omega-1} fgf = (fg)^\omega f = (fgf)^\omega = f$, thus $gf \in E(M)$.
  By Proposition~\ref{prop:ds-char}, we obtain $gf \Jeq e$.
  Therefore, $egf = eg(fef) = (e \ms gf \ms e) f = ef$.
\end{proof}

\begin{lemma}
  Let $M \in \vDO$, let $e, f, g \in E(M)$ and let $x, y \in M$ such that $e
  \Jeq f \Jle g, x, y$. Then $exgyf = exyf$.
  \label{lem:do}
\end{lemma}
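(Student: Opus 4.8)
The plan is to reduce the three-letter insertion $exgyf = exyf$ to the idempotent case already handled by Lemma~\ref{lem:do-pre}. The natural idea is to replace $x$ and $y$ by idempotents that are $\mathcal{J}$-equivalent to $e$ and $f$, namely $x' = (exe)^\omega$ (or a similar product) and $y' = (fyf)^\omega$. First I would check, using $M \in \vDS$ and Proposition~\ref{prop:ds-char}(2), that since $e \Jle x$ we have $(exe)^\omega = e$, and symmetrically $(fyf)^\omega = f$; this is the same computation used inside Lemma~\ref{lem:do-pre}. So the obvious "replace $x$ by an idempotent" move collapses everything to $e$ and $f$ and is too lossy — I need to be more careful and keep track of how $x$ and $y$ sit relative to the idempotents.

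The cleaner route: set $e' = (xex)^\omega$... actually, the right auxiliary elements are the idempotents $p = (xe)^\omega$ and $q = (fy)^\omega$ lying to the left of $x$ and right of $y$ respectively. One checks $p \in E(M)$, $p \Jeq e$ (since $e \Jle x$ forces $xe \Jeq e$ in $\vDS$, using Proposition~\ref{prop:ds-char}), and $px = x$ is false in general — so instead I would use that $xe = x e$ and write $x = x \cdot e = x \cdot (exe)^\omega \cdot$(nothing new). Let me instead follow the pattern of Lemma~\ref{lem:do-pre} directly. Key steps, in order: (i) from $e \Jeq f \Jle g, x, y$ deduce, via Lemma~\ref{lem:ds} / Proposition~\ref{prop:ds-char}(2), that $exe$, $fxf$, $fyf$, $exye f$-type products are all $\mathcal{J}$-equivalent to $e$ and that their $\omega$-powers equal $e$ or $f$ as appropriate; (ii) insert the idempotent $e$ on the left of $x$ and $f$ on the right of $y$ for free: $exgyf = e(exe)^\omega x g y (fyf)^\omega f$ is not literally an identity, so instead observe $exgyf = e \cdot x \cdot g \cdot y \cdot f$ and rewrite $xgy = x \cdot g \cdot y$; (iii) apply Lemma~\ref{lem:do-pre} with the idempotent triple obtained by absorbing $x$ into a left idempotent and $y$ into a right idempotent.

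Concretely, the trick I expect the author uses: let $x_e = (ex e)^\omega \cdot$ fails; rather, note $e \Jle x$ means $e \Jle xe$ and $e \Jle ex$, and in $\vDS$ the set $\{\,z : e \Jle z\,\}$ is a submonoid containing $x$, so $ex$, $xe$, $exe$ all lie in it and $(exe)^\omega = e$. Now write $exgyf = (exe)(exe)^{\omega-1} e x g y f \cdots$ — the useful rewrite is $ex = ex(exe)^{\omega-1}(exe) = ex(exe)^{\omega-1} \cdot exe$, which lets me produce a copy of the idempotent $e' := (exe)^\omega = e$... this keeps returning $e$. The resolution is that we do \emph{not} need a new idempotent: Lemma~\ref{lem:do-pre} already gives $egf = ef$, and more generally for idempotents $e \Jeq f \Jle g$. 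The step that does the work is: show $xgy$ and $xy$ become equal once sandwiched by $e$ and $f$, by applying Lemma~\ref{lem:do-pre} to the idempotents $\hat e = (e x \cdot x e)^{\omega}$-type expressions that are $\mathcal J$-equivalent to $e$. The main obstacle, and the only nonroutine point, is choosing the idempotent $\mathcal{J}$-equivalent to $e$ that simultaneously (a) absorbs $x$ on one side when sandwiched by $e$, and (b) is $\Jle g$, so that Lemma~\ref{lem:do-pre} applies with $g$ playing the middle role; I would expect to take something like $\hat e = (e x y f \, f y x e)^\omega$ or to prove the identity in two stages (first insert $g$ past $y$, then past $x$), using at each stage Lemma~\ref{lem:do-pre} together with the fact that $exe \Jeq e$, $fyf \Jeq f$. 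Once the correct idempotents are identified, each rewriting step is a one-line application of $\vDS$/$\vDO$ identities as in the proof of Lemma~\ref{lem:do-pre}.
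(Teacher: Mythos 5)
Your overall strategy is the same as the paper's---reduce to Lemma~\ref{lem:do-pre} by manufacturing idempotents $\mathcal{J}$-equivalent to $e$ and $f$ that can be slotted in on either side of $g$---but the proposal never completes the one step that carries the proof: identifying those idempotents and proving the absorption identities. The only concrete candidates you commit to, $p = (xe)^\omega$ and $q = (fy)^\omega$, sit on the wrong side of $x$ and $y$: what you need is an idempotent absorbed on the \emph{right} of $ex$ (so it can be inserted between $x$ and $g$) and one absorbed on the \emph{left} of $yf$. You correctly observe that $px = x$ fails and then abandon the attempt; the closing suggestions ($\hat e = (exyf\,fyxe)^\omega$, or ``prove it in two stages'') are left entirely unverified, and the first of these is not obviously $\Jeq e$ nor obviously absorbable. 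As written, the text is a search log that circles the answer without reaching it, so it does not constitute a proof.

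For comparison, the paper's argument is short once the right elements are chosen: since $e \Jle x$, Proposition~\ref{prop:ds-char} gives $(exe)^\omega = e$, whence $ex = (exe)^\omega x = ex\,(ex)^\omega$, and dually $yf = (yf)^\omega\, yf$. The idempotents $(ex)^\omega$ and $(yf)^\omega$ satisfy $(ex)^\omega \Jeq e \Jeq f \Jeq (yf)^\omega$ and are $\Jle g$, so Lemma~\ref{lem:do-pre} yields $(ex)^\omega g (yf)^\omega = (ex)^\omega (yf)^\omega$, and therefore $exgyf = ex\,(ex)^\omega g (yf)^\omega\, yf = ex\,(ex)^\omega (yf)^\omega\, yf = exyf$. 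Your instinct that ``the only nonroutine point is choosing the idempotent'' is exactly right---which is why leaving that choice unresolved is a genuine gap rather than a routine omission.
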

\begin{proof}
  Since $M \in \vDS$, we have $ex = (exe)^\omega x = ex (ex)^\omega$ and $yf =
  y (fyf)^\omega = (yf)^\omega yf$.
  Note that $(ex)^\omega \Jeq e \Jeq f \Jeq (yf)^\omega$ by
  Proposition~\ref{prop:ds-char} and thus, Lemma~\ref{lem:do-pre} yields
  $(ex)^\omega g (yf)^\omega = (ex)^\omega (yf)^\omega$.
  Finally, combining all the equalities, we obtain the desired statement $exgyf
  = ex (ex)^\omega g (yf)^\omega yf = ex (ex)^\omega (yf)^\omega yf = exyf$.
\end{proof}

For the remainder of this section, let $M_1$, \ldots, $M_k \in \vDO$ be
finite monoids and let $h_i \colon A^* \to M_i$ be morphisms. We let $N =
\abs{M_1} + \cdots + \abs{M_k}$.
The occurrence of a word $u$ in $puq$ is called \emph{isolated} if for each $i \in \os{1, \dots,
k}$, there exist words $v_i, w_i \in A^*$ such that
\begin{equation*}
  \alp(v_i) = \alp(w_i) \supseteq \alp(u), \quad
  h_i(p v_i) = h_i(p) \quad \text{~and~} \quad
  h_i(w_i q) = h_i(q).
\end{equation*}

Let $w = a_1 u_1 a_2 \cdots u_{\ell-1} a_\ell$ be a factorization of $w$
with $a_j \in A$ and $u_j \in A^*$ for all $j \in \os{1, \dots, \ell}$. Let
$p_j = a_1 u_1 a_2 \cdots u_{j-1} a_j$ and $q_j = a_{j+1} u_{j+1} \cdots a_{\ell-1} u_{\ell-1}
a_\ell$.
The factorization $w = a_1 u_1 a_2 \cdots u_{\ell-1} a_\ell$ is called
\emph{piecewise isolating}
%mkk{piecewise isolated}
 if, for each $j \in \os{1, \dots, \ell-1}$, the
occurrence of $u_j$ in $w = p_j u_j q_j$ is isolated.
The value $\ell$ is the \emph{length} of this factorization.

\begin{lemma}
  Every word $w \in A^*$ admits a piecewise isolating factorization of length
  at most $N^2$.
\end{lemma}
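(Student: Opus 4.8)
The plan is to build the piecewise isolating factorization greedily from left to right, showing that each ``block'' $u_j$ between consecutive singled-out letters can be extended a bounded number of times before we are forced to single out a new letter. The quantity that will drop at each such step is the multiset $\big(\alp(h_i\text{-relevant suffix})\big)_{i}$ of sub-alphabets of the monoids, or more precisely a pair of monotone quantities: for each $i$, as we scan the prefix $p$, the set $\set{x \in M_i}{h_i(p) \Jle x}$ only shrinks (since $h_i(pa) \Jle h_i(p)$ always, ideals decrease along a word), and it can shrink at most $\abs{M_i}$ times before stabilising. This is the analogue, for the $\Jeq$-order, of the alphabet-stabilisation argument; Lemma~\ref{lem:ds} is exactly what lets us replace ``the alphabet has stabilised'' with the monoid-theoretic statement ``$h_i(v) \Jle h_i(u)$ whenever $\alp(u) \subseteq \alp(v)$ and $h_i(v)$ is idempotent''.

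Concretely, I would argue as follows. Scanning $w$ left to right, maintain for each $i$ the value $m_i = h_i(\text{current prefix})$ and track the idempotent power $m_i^{\omega}$; the $\Jeq$-class of $m_i^{\omega}$ can only go down in the $\Jle$-order, and strictly down at most $\abs{M_i}$ times, so across all $k$ coordinates there are at most $N = \sum_i \abs{M_i}$ positions where some coordinate ``drops''. Between two consecutive drop-positions (and before the first, and after the last) the prefix-behaviour in every coordinate is, in a suitable sense, ``saturated'': there is an idempotent $e_i = m_i^{\omega}$ for the current prefix $p$ such that for any factor $u$ occurring after $p$ with $\alp(u)$ contained in the stable alphabet, $e_i \Jle h_i(u)$ by Lemma~\ref{lem:ds}, and symmetrically on the suffix side we can find $w_i$ with $\alp(w_i) \supseteq \alp(u)$ and $h_i(w_i q) = h_i(q)$ by padding $q$ with an idempotent-inducing word over its own (stable) alphabet. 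Choosing the $v_i$ analogously on the prefix side then witnesses that the occurrence of $u$ is isolated. Hence within one ``saturated segment'' every internal block is automatically isolated, and we only need to insert a new singled-out letter $a_j$ at the $O(N)$ drop-positions on each of the two sides (prefix-saturation and suffix-saturation), giving a factorization of length $O(N)$; a crude count bounds this by $N^2$.

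The step I expect to be the main obstacle is making the ``saturation'' bookkeeping precise enough to cover \emph{both} the prefix condition $h_i(p v_i) = h_i(p)$ and the suffix condition $h_i(w_i q) = h_i(q)$ \emph{simultaneously}, with a single common alphabet $\alp(v_i) = \alp(w_i) \supseteq \alp(u)$ — because the stable prefix-alphabet and the stable suffix-alphabet need not coincide. The fix is to refine the factorization whenever these two alphabets disagree on a letter of $\alp(u_j)$: such a disagreement forces a coordinate to be non-saturated on at least one side, hence sits at a drop-position, and there are only $O(N)$ of those in each scanning direction. Counting drop-positions for the $k$ left-scans and the $k$ right-scans and allowing one extra singled-out letter per disagreement gives the quadratic bound; pinning down the exact constant is routine once the qualitative picture is fixed, so I would state it as $N^2$ and not optimise.
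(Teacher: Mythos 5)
Your overall strategy---scan the word, track a monotone quantity per monoid, and cut at the finitely many positions where it changes---has the right general shape, but the quantity you track is too coarse and the resulting isolation claim has a genuine gap. Isolation of $u$ in $puq$ demands the \emph{exact} equalities $h_i(pv_i)=h_i(p)$ and $h_i(w_iq)=h_i(q)$ for witnesses with $\alp(v_i)=\alp(w_i)\supseteq\alp(u)$. Stabilisation of the $\Jeq$-class of the prefix image only yields $h_i(pv)\Jeq h_i(p)$, and your proposed witnesses (``idempotent-inducing words over the stable alphabet'') do not bridge that gap. Concretely, take $M=\os{1,e,f}$ with $xy=y$ for $x,y\in\os{e,f}$ (a right-zero semigroup with an identity adjoined; it lies in $\vDO$), $h(a)=e$, $h(b)=f$, $p=a$, $u=b$: the $\Jeq$-class of the prefix image never drops, yet every $v\in\os{b}^+$ satisfies $h(pv)=f\ne e=h(p)$. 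The occurrence \emph{is} isolated, but only via a witness such as $v=ba$ that reuses a letter from the prefix, i.e.\ via a loop in the word itself rather than a word over the block's own alphabet. Nothing in your saturation bookkeeping produces such a loop, so the claim that every internal block of a saturated segment is automatically isolated is unsupported, and your sketched fix for making the prefix-side and suffix-side alphabets agree inherits the same problem.

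The paper's proof sidesteps all of this and, notably, uses neither Lemma~\ref{lem:ds} nor any property of $\vDO$ or $\vDS$. It attaches to each position $r$ the set $C_r$ of all pairs $(h_i(b_1\cdots b_s),h_i(b_{s+1}\cdots b_m))$ realised at positions $s\le r$, cuts the word exactly where $C_r$ strictly grows --- at most $\sum_i\abs{M_i}^2\le N^2$ times, which is where the quadratic bound comes from (the breakpoints are \emph{not} $O(N)$ in this argument) --- and for each block takes as witnesses the factor of $w$ lying between the first realisation of the current image pair and the current position, together with its cyclic rotation. These two words automatically have equal alphabets containing $\alp(u_j)$ and fix the prefix and suffix images exactly. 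To salvage your approach you would have to replace $\Jeq$-stabilisation by repetition of actual image pairs, which is precisely the paper's argument.
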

\begin{proof}
  Let $w = b_1 \cdots b_m$ where $b_r \in A$ for $1 \le r \le m$. To each
  position $r \in \os{1, \dots, m}$, we assign a set $C_r = \set{(h_i(b_1
  \cdots b_s), h_i(b_{s+1} \cdots b_m))}{1 \le i \le k, 1 \le s \le
  r}$.
  Note that by definition, we have $C_r \subseteq C_{r+1}$. Let $r_1,
  \dots, r_\ell \in \N$ such that $r_1 = 1$, $r_\ell= m$ and $C_{r_{j-1}} =
  C_{r_j - 1} \subsetneq C_{r_j}$ for all $j \in \os{2, \dots, \ell}$. Let $a_j
  = b_{r_j}$ and let $u_j = b_{r_j+1} \cdots b_{r_{j+1}-1}$ for all
  $j \in \os{1, \dots, \ell}$.

  Now, for $j \in \os{1, \dots, \ell}$ and $i \in \os{1, \dots, k}$, let $t(j,i)$ be the smallest index $g$ such that $(h_i(a_1 u_1 \cdots
  a_{j} u_{j}), h_i(a_{j+1} u_{j+1} \cdots a_{\ell-1} u_{\ell-1} a_\ell)) \in C_{r_g}$, \ie the prefix of length $r_{t(j,i)}$ of $w$ is the shortest
  prefix $p$ such that $w = pq$ for some $q \in A^*$ and the image of $p$ under
  $h_i$ is $h_i(a_1 u_1 \cdots a_{j} u_{j})$ and the image of $q$ is
  $h_i(a_{j+1} u_{j+1} \cdots a_{\ell-1} u_{\ell-1} a_\ell)$.
  Note that $t(j,i) \le j$ and, by choice of $t(j,i)$, we have
  \begin{align}
    h_i(a_1 u_1 \cdots a_{j} u_{j}) & = h_i(a_1 u_1 a_2 \cdots u_{t(j,i)-1} a_{t(j,i)}) \text{~and} \label{eq1} \\
    h_i(a_{j+1} u_{j+1} \cdots a_{\ell-1} u_{\ell-1} a_\ell) & = h_i(u_{t(j,i)} a_{t(j,i)+1} \cdots u_{\ell-1} a_\ell). \label{eq2}
  \end{align}
  Let $w_{ji}=u_{t(j,i)} a_{t(j,i)+1} \cdots u_{j-1} a_j u_j$ 
  and  let $v_{ji} = u_j u_{t(j,i)} a_{t(j,i)+1} \cdots u_{j-1} a_j$.
  In the special case $t(j,i) = j$, we obtain $w_{ji} = v_{ji} = u_j$.
  
  \noindent
  For $p_{j} = a_1 u_1 a_2 \cdots u_{j-1} a_j$ and $q_{j} = a_{j+1} u_{j+1}
  \cdots a_{\ell-1} u_{\ell-1} a_\ell$, 
  equation (\ref{eq1}) implies
  \begin{align*}
    h_i(p_{j} v_{ji}) & = h_i(a_1 u_1  \cdots a_j u_j) \cdot h_i(u_{t(j,i)} a_{t(j,i)+1} \cdots u_{j-1} a_j) \\
               & = h_i(a_1 u_1 a_2 \cdots u_{t(j,i)-1} a_{t(j,i)}) \cdot h_i(u_{t(j,i)} a_{t(j,i)+1} \cdots u_{j-1} a_j) = h_i(p_j)
  \end{align*}
  and, similarly, equation (\ref{eq2}) yields $h_i(w_{ji} q_{j}) = h_i(q_{j})$.
  Since $u_j$ is a suffix of $w_{ji}$ and since $v_{ji}$ can be obtained by
  rotating $w_{ji}$ cyclically, we have $\alp(v_{ji}) = \alp(w_{ji}) \supseteq
  \alp(u_j)$.
  The bound on $\ell$ follows from the fact that $C_{r_1} \subsetneq \cdots
  \subsetneq C_{r_\ell} \subseteq \bigunion_{i=1}^k {M_i \times M_i}$.
\end{proof}

The lemma above suggests that it is sufficient to construct SLPs for isolated
occurrences. Thus, let now $u \in A^*$ be an isolated occurrence of $w = puq$,
and let $B = \alp(u)$.
For each $i \in \os{1, \dots, k}$, we define an equivalence relation $\equiv_i$
on the submonoid $T_i = h_i(B^*)$ of $M_i$ by $m \equiv_i n$ if and only if
$h_i(p) \ms xmy \ms h_i(q) = h_i(p) \ms xny \ms h_i(q)$ for all $x, y \in T_i$.
It is easy to check that this relation is a congruence.
Moreover, for all $u, v \in B^*$ with $h_i(u) \equiv_i h_i(v)$, we have
$h_i(puq) = h_i(pvq)$.
Another fundamental property of $\equiv_i$ is captured in the following lemma.

\begin{lemma}
  For each $i \in \os{1, \dots, k}$, the quotient ${T_i} / {\equiv_i}$ is a group.
  \label{lem:groupred}
\end{lemma}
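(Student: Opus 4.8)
The plan is to show that the finite monoid $T_i/{\equiv_i}$ has only one idempotent, namely its identity; since a finite monoid is a group precisely when its unique idempotent is the identity, this suffices. Idempotents lift along the quotient map: if a class $c$ of $T_i/{\equiv_i}$ satisfies $c^2 = c$ and $[m] = c$, then $m^{\omega}$ (with $\omega = \omega_{M_i}$) is idempotent in $T_i$ and $[m^{\omega}] = c^{\omega} = c$. Hence it is enough to prove that $\epsilon \equiv_i 1$ for every $\epsilon \in E(T_i)$, i.e.\ that $h_i(p)\, x\epsilon y\, h_i(q) = h_i(p)\, xy\, h_i(q)$ for all $x, y \in T_i$.

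The heart of the argument is to extract, from the data of the isolated occurrence, two idempotents $e, f \in E(M_i)$ that absorb $h_i(p)$ on the right, respectively $h_i(q)$ on the left, and that lie $\Jle$-below every element of $T_i$. Take the words $v_i, w_i \in A^*$ witnessing that the occurrence of $u$ is isolated for the index $i$, and set $e = h_i(v_i)^{\omega}$ and $f = h_i(w_i)^{\omega}$; both are idempotent. Iterating $h_i(p v_i) = h_i(p)$ gives $h_i(p)\, e = h_i(p)$, and symmetrically $f\, h_i(q) = h_i(q)$. Every $m \in T_i$ has the form $h_i(z)$ with $z \in B^*$, and since $\alp(v_i) = \alp(w_i) \supseteq \alp(u) = B$ we get $\alp(z) \subseteq \alp(v_i^{\omega}) = \alp(w_i^{\omega})$; as $M_i \in \vDS$, Lemma~\ref{lem:ds} then yields $e \Jle m$ and $f \Jle m$ for every $m \in T_i$. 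Finally, applying Lemma~\ref{lem:ds} to the idempotent $e$ together with the word $w_i^{\omega}$ --- whose content equals that of $v_i^{\omega}$ --- and symmetrically with the roles of $e$ and $f$ interchanged gives $e \Jle f$ and $f \Jle e$, hence $e \Jeq f$.

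It then remains to apply Lemma~\ref{lem:do}. Writing $P = h_i(p)$ and $Q = h_i(q)$ and using $Pe = P$, $fQ = Q$, we have
\[
  P\, x\epsilon y\, Q = P\, e x \epsilon y f\, Q
  \qquad\text{and}\qquad
  P\, xy\, Q = P\, e x y f\, Q .
\]
The idempotents $e, f, \epsilon$ satisfy $e \Jeq f$, and $e$ (hence also $f$) lies $\Jle$-below each of $\epsilon, x, y \in T_i$, so Lemma~\ref{lem:do}, applied with $g := \epsilon$, gives $e x \epsilon y f = e x y f$. Multiplying by $P$ on the left and $Q$ on the right yields $P\, x\epsilon y\, Q = P\, xy\, Q$, which is exactly $\epsilon \equiv_i 1$.

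The one delicate point is the equality $e \Jeq f$: the two absorbing idempotents come from different witness words $v_i, w_i$ and need not even lie in $T_i$, so they cannot be compared directly inside $T_i$; it is precisely the content condition $\alp(v_i) = \alp(w_i)$ built into the definition of an isolated occurrence, together with the $\vDS$-property packaged in Lemma~\ref{lem:ds}, that forces $e$ and $f$ into the same $\Jeq$-class. Once that is secured, everything else is a direct application of the $\vDO$-identities already established in Lemmas~\ref{lem:ds} and~\ref{lem:do}.
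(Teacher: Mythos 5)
Your proof is correct and follows essentially the same route as the paper's: both arguments reduce the claim to showing that every idempotent of $T_i$ (equivalently, every $m^\omega$) is $\equiv_i$-congruent to $1$, and both establish this by extracting the absorbing idempotents $e = h_i(v_i)^\omega$, $f = h_i(w_i)^\omega$ from the isolation witnesses, deriving $e \Jeq f \Jle g, x, y$ via Lemma~\ref{lem:ds}, and applying Lemma~\ref{lem:do}. Your extra care in justifying $e \Jeq f$ from the content condition $\alp(v_i) = \alp(w_i)$ makes explicit a step the paper leaves implicit.
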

\begin{proof}
  Let $\omega = N!$ and let $m \in T_i$ be an arbitrary element. It suffices to
  show that $m^\omega \equiv_i 1$, \ie{}for all $x, y \in T_i$, we have $h_i(p)
  \ms xm^\omega y \ms h_i(q) = h_i(p) \ms xy \ms h_i(q)$.

  Let $v_i, w_i \in A^*$ as in the definition of isolated occurrences and let
  $e = h(v_i^\omega)$ and $f = h(w_i^\omega)$.
  Note that $h_i(p v_i) = h_i(p)$ implies $h_i(p) e =
  h_i(p)$. Analogously, we have $f h_i(q) = h_i(q)$.
  Since $B$ is contained in $\alp(v_i) = \alp(w_i)$ and since $m, x, y \in T_i
  = h_i(B^*)$, we have $e \Jeq f \Jle m^\omega, x, y$ by Lemma~\ref{lem:ds}.
  Therefore,
  \begin{equation*}
    h_i(p) \ms x m^\omega y \ms h_j(q) =
    h_i(p) \ms ex m^\omega yf \ms h_i(q) =
    h_i(p) \ms exyf \ms h_i(q) =
    h_i(p) \ms xy \ms h_i(q)
  \end{equation*}
  where the second equality uses Lemma~\ref{lem:do}.
\end{proof}

We now return to the proof of the main theorem of this section.

\begin{proof}[Proof of Theorem~\ref{thm:small-model}]
  By considering a piecewise isolating factorization of $w$, it suffices to
  show that if $u$ is an isolated occurrence in $w=puq$, then there exists
  an SLP $S$ of polynomial size with $h_i(pSq) = h_i(puq)$ for all $i \in
  \os{1, \dots, k}$. Combining the letters $a_i$ and the SLPs for the isolated
  occurrences in the piecewise isolating factorization, we obtain the SLP for
  $w$.

  Let again $B = \alp(u)$.
  To obtain a polynomial-size SLP $S$ with $h_i(pSq) = h_i(puq)$ for all $i \in
  \os{1, \dots, k}$, we consider the morphisms $\psi_i \colon B^* \to
  {T_i}/{\equiv_i}$ defined by $\psi_i(v) = {[h_i(v)]}_{\equiv_i}$, \ie{}each
  word $v$ is mapped to the equivalence class of $h_i(v)$ with respect to
  $\equiv_i$.
  Note that $\abs{{T_i}/{\equiv_i}} \le \abs{T_i} \le \abs{M_i}$ for $1 \le i
  \le k$ and by Lemma~\ref{lem:groupred}, each of the monoids
  ${T_i}/{\equiv_i}$ is a group.
  By Theorem~\ref{thm:groups}, there exists a polynomial-size SLP $S$ with
  $\psi_i(S) = \psi_i(u)$ for all $i \in \os{1, \dots, k}$ and, by
  the definition of $\equiv_i$, we obtain $h_i(pSq) = h_i(puq)$, as desired.
\end{proof}

\section{Summary and Outlook}

We investigated the complexity of the intersection problem for finite monoids,
showing that the problem is $\NP$-complete for varieties contained in $\vDO$
and $\PSPACE$-complete for varieties not contained within $\vDS$.
To obtain a dichotomy result, one needs to investigate the complexity of the
problem when monoids from $\vDS \setminus \vDO$ are part of the input. Using
techniques similar to those in Section~\ref{sec:small-model}, we were able to
show that for a subset of this class, the problem remains $\NP$-complete and
thus, we conjecture that the problem is $\NP$-complete whenever $\vV \subseteq
\vDS$.
The fact that $\vDS \setminus \vDO$ have not been studied and understood well
enough from a language-theoretic perspective makes the problem of classifying
the complexity of these monoids challenging but, at the same time, an
interesting object for further research.

\newcommand{\Ju}{Ju}\newcommand{\Ph}{Ph}\newcommand{\Th}{Th}\newcommand{\Ch}{Ch}\newcommand{\Yu}{Yu}\newcommand{\Zh}{Zh}\newcommand{\St}{St}\newcommand{\curlybraces}[1]{\{#1\}}

\clearpage
\appendix

\section{Correctness of the Reduction in Theorem~\ref{thm:np-hard}}

We skipped the correctness proof for the reduction presented in the proof of
Theorem~\ref{thm:np-hard}. Since correctness may not be entirely obvious at
first sight, we give the missing arguments here. We assume without loss of
generality that $\lbl$ contains at least two different labels.

First, note that, if there exists an $n$-tiling $\tau \colon \os{1, \dots, n}
\times \os{1, \dots, n} \to T$, then the word obtained by concatenating all
letters $(\tau(i, j), i, j)$ with $i, j \in \os{1, \dots, n}$ yields a word
contained in each of the languages recognized by the constructed morphisms.

Conversely, suppose that there exists a word $w \in A^*$ contained in each of
the languages recognized by the constructed morphisms, \ie{}the following
properties hold:

\begin{enumerate}
  \item $f_{j,d}(w) = b_{j,d}(w) = x$ for $1 \le j \le n$ and $d \in \os{w, e, s, n}$,
  \item $h_{i,j,\mu}(w) \in \os{(1, 1), (x, x)}$ for $1 \le i \le n$ and $1 \le j \le n-1$ and $\mu \in \lbl$, \label{enum:aaa}
  \item $v_{i,j,\mu}(w) \in \os{(1, 1), (x, x)}$ for $1 \le i \le n-1$ and $1 \le j \le n$ and $\mu \in \lbl$,
  \item $g_{i,j,d,\mu,\mu'}(w) \in \os{(1, 1), (1, x), (x, 1)}$ for $1 \le i, j \le n$ and $d \in \os{w, e, s, n}$ and $\mu \ne \mu' \in \lbl$.
\end{enumerate}

For $i, j \in \os{1, \dots, n}$ and $d \in \os{w, e, s, n}$ and $\mu \in \lbl$,
we say that position $(i, j)$ is \emph{$\mu$-labelled in direction $d$} if
$g_{i,j,d,\mu,\mu'} \in \os{x} \times M$ for some $\mu' \ne \mu$. Note that if
$\mu \ne \mu'$, a position $(i, j)$ cannot be both $\mu$-labelled and
$\mu'$-labelled in the same direction $d$, since otherwise, we would have
$g_{i,j,d,\mu,\mu'} = (x, x)$.
We now define a tiling $\tau \colon \os{1, \dots, n} \times \os{1, \dots, n}$
by setting $\tau(i, j) = (\mu_w, \mu_e, \mu_s, \mu_n)$ if position $(i, j)$ is
$\mu_d$-labelled in direction $d$ for all $d \in \os{w, e, s, n}$.
It remains to show that $\tau$ is a valid $n$-tiling of $\T$.

Let $f = t_1 \cdots t_n$ and let $b = u_1 \cdots u_n$.
Consider some arbitrary $j \in \os{1, \dots, n}$.
First of all, since $f_{j,d}(w) = x$ for each direction $d \in \os{w, e, s,
n}$, the position $(1, j)$ is $\lambda_d(t_j)$-labeled in direction $d$ and,
equivalently, the position $(n, j)$ is $\lambda_d(u_j)$-labelled in direction
$d$ since $b_{j,d}(w) = x$. This means that $\tau(1, j) = t_j$ and $\tau(n, j)
= u_j$.

Suppose now, for the sake of contradiction, that a position $(i, j)$ is
$\mu$-labeled in direction $e$ and that position $(i, j+1)$ is $\mu'$-labeled
in direction $w$ for $\mu, \mu' \in \lbl$ with $\mu \ne \mu'$. Then, by the
remark above, position $(i, j+1)$ cannot be $\mu$-labeled in direction $w$
which yields $h_{i,j,\mu} = (x, y)$ for some $y \in M \setminus \os{x}$,
contradicting property~\ref{enum:aaa} above.
Using the same arguments, one can show that if $(i, j)$ is
$\mu$-labeled in direction $s$, then $(i+1, j)$ is also $\mu$-labeled in
direction $n$.
Thus, we have $\lambda_e(\tau(i, j)) = \lambda_w(\tau(i, j+1))$ and
$\lambda_s(\tau(i, j)) = \lambda_n(\tau(i+1,j))$.
\qed


\begin{thebibliography}{10}

\bibitem{alm94}
J.~Almeida.
\newblock {\em Finite Semigroups and Universal Algebra}.
\newblock World Scientific, Singapore, 1994.

\bibitem{BabaiLS87}
L.~Babai, E.~M. Luks, and {\'{A}}.~Seress.
\newblock Permutation groups in {NC}.
\newblock In {\em Proceedings of the 19th Annual {ACM} Symposium on Theory of
  Computing, 1987, New York, New York, {USA}}, pages 409--420, 1987.

\bibitem{Beaudry88}
M.~Beaudry.
\newblock Membership testing in commutative transformation semigroups.
\newblock {\em Inf. Comput.}, 79(1):84--93, 1988.

\bibitem{Beaudry88thesis}
M.~Beaudry.
\newblock {\em Membership Testing in Transformation Monoids}.
\newblock PhD thesis, McGill University, Montreal, Quebec, 1988.

\bibitem{Beaudry94}
M.~Beaudry.
\newblock Membership testing in threshold one transformation monoids.
\newblock {\em Inf. Comput.}, 113(1):1--25, 1994.

\bibitem{BeaudryMT92}
M.~Beaudry, P.~McKenzie, and D.~Th{\'{e}}rien.
\newblock The membership problem in aperiodic transformation monoids.
\newblock {\em J. {ACM}}, 39(3):599--616, 1992.

\bibitem{Bernatsky1997}
L.~Bern{\'a}tsky.
\newblock Regular expression star-freeness is {PSPACE}-complete.
\newblock {\em Acta Cybernetica}, 13(1):1--21, 1997.

\bibitem{BlondinKM16}
M.~Blondin, A.~Krebs, and P.~McKenzie.
\newblock The complexity of intersecting finite automata having few final
  states.
\newblock {\em Computational Complexity}, 25(4):775--814, 2016.

\bibitem{Chlebus86}
B.~S. Chlebus.
\newblock Domino-tiling games.
\newblock {\em Journal of Computer and System Sciences}, 32(3):374--392, 1986.

\bibitem{chohuynh91}
S.~Cho and D.~T. Huynh.
\newblock Finite automaton aperiodicity is {PSPACE}-complete.
\newblock {\em Theoretical Computer Science}, 88:96--116, 1991.

\bibitem{dgh05IC}
V.~Diekert, C.~Guti{\'e}rrez, and {\Ch}.~Hagenah.
\newblock The existential theory of equations with rational constraints in free
  groups is {PSPACE}-complete.
\newblock {\em Information and Computation}, 202:105--140, 2005.
\newblock Conference version in STACS 2001, LNCS 2010, 170--182, 2004.

\bibitem{FurstHopcroftLuks80}
M.~Furst, J.~Hopcroft, and E.~Luks.
\newblock Polynomial-time algorithms for permutation groups.
\newblock In {\em 21st Annual Symposium on Foundations of Computer Science
  (SFCS 1980)}, pages 36--41, Oct 1980.

\bibitem{HolzerK11}
M.~Holzer and M.~Kutrib.
\newblock Descriptional and computational complexity of finite automata --- a
  survey.
\newblock {\em Inf. Comput.}, 209(3):456--470, 2011.

\bibitem{JiangR91}
T.~Jiang and B.~Ravikumar.
\newblock Minimal {NFA} problems are hard.
\newblock In J.~L. Albert, B.~Monien, and M.~Rodríguez-Artalejo, editors, {\em
  ICALP}, volume 510 of {\em Lecture Notes in Computer Science}, pages
  629--640. Springer, 1991.

\bibitem{koz77}
D.~Kozen.
\newblock Lower bounds for natural proof systems.
\newblock In {\em Proc.~of the 18th Ann.~Symp. on Foundations of Computer
  Science, FOCS'77}, pages 254--266, Providence, Rhode Island, 1977. IEEE
  Computer Society Press.

\bibitem{LangeR92}
K.~Lange and P.~Rossmanith.
\newblock The emptiness problem for intersections of regular languages.
\newblock In {\em Mathematical Foundations of Computer Science 1992, 17th
  International Symposium, MFCS'92, Prague, Czechoslovakia, August 24-28, 1992,
  Proceedings}, pages 346--354, 1992.

\bibitem{pin86}
J.-{\'E}. Pin.
\newblock {\em {Varieties of Formal Languages}}.
\newblock North Oxford Academic, London, 1986.

\bibitem{sch65sf}
M.-P. Sch{\"u}tzenberger.
\newblock On finite monoids having only trivial subgroups.
\newblock {\em Information and Control}, 8:190--194, 1965.

\bibitem{Sims1967}
C.~C. Sims.
\newblock Computational methods in the study of permutation groups.
\newblock In {\em Proceedings of the Conference on Computational Problems in
  Abstract Algebra 1967, Oxford, United Kingdom}, pages 169--183, New York,
  1968. Pergamon.

\bibitem{tt02}
P.~Tesson and D.~Th{\'e}rien.
\newblock Diamonds are forever: {T}he variety $\mathrm{DA}$.
\newblock In G.~M. dos {Gomes Moreira da Cunha}, P.~V.~A. da~Silva, and
  J.-{\'E}. Pin, editors, {\em Semigroups, Algorithms, Automata and Languages,
  Coimbra (Portugal) 2001}, pages 475--500. World Scientific, 2002.

\end{thebibliography}
\end{document}